\documentclass[prx,aps,floatfix,amsmath,superscriptaddress,twocolumn,longbibliography,nofootinbib]{revtex4-2}
\pdfoutput=1

\usepackage{amssymb,enumerate}
\usepackage{graphicx}
\usepackage{amsmath}
\usepackage{amsthm}
\usepackage{color}
\usepackage{dsfont}
\usepackage{hyperref}
\usepackage{boldline}
\usepackage{cleveref}
\usepackage{tikz}
    \usetikzlibrary{math}
    \usetikzlibrary{decorations.pathreplacing}
    \usetikzlibrary{quantikz}
\usepackage{adjustbox}
\usepackage{xcolor}
\hypersetup{
    colorlinks,
    linkcolor={red!50!black},
    citecolor={blue!50!black},
    urlcolor={blue!80!black}
}
\usepackage{xfrac}
\usepackage{empheq}
\usepackage{enumitem}
\usepackage{tabularray}
\usepackage{graphicx,graphics,epsfig,times,bm,bbm,amssymb,amsfonts,mathrsfs}
\usepackage[normalem]{ulem}
\usepackage{subfigure}
\usepackage{braket}
\usepackage{natbib}
\usepackage{chngcntr}

\newcommand{\Sn}{\mathbb{S}_n}

\newcommand{\R}{\mathbb{R}}
\newcommand{\C}{\mathbb{C}}

\newcommand{\Tr}{\operatorname{Tr}}

\newcommand{\<}{\langle}
\renewcommand{\>}{\rangle}
\renewcommand{\H}{\mathcal{H}}

\newcommand{\HTC}{{H}_{\text{TC}}}

\newcommand{\jmin}{j_\text{min}}

\newcommand{\qj}{_{q,j}}

\newcommand{\vac}{\ket{0}_{\text{osc}}}

\newcommand{\gTC}{g_{\text{TC}}}

\newcommand{\1}{\mathbb{I}}

\newcommand{\eq}{\,=\,}
\newcommand{\s}{\text{Span}}

\newcommand{\ipo}[3]{\left\langle #1 \middle| #2 \middle| #3 \right\rangle}

\newcommand{\bes}{\begin{subequations}}
\newcommand{\ees}{\end{subequations}}
\newcommand{\bea}{\begin{eqnarray}}
\newcommand{\eea}{\end{eqnarray}}
\newcommand{\be}{\begin{equation}}
\newcommand{\ee}{\end{equation}}

\def\al{\alpha}

\def\>{\rangle}
\def\<{\langle}
\def\Tr{\textrm{Tr}}

\newcommand{\ignore}[1]{}

\newtheorem{thm}{Theorem}

\newtheorem{theorem}{Theorem}

\newtheorem{proposition}[theorem]{Proposition}

\crefname{tab}{Table}{Tables}
\crefname{eq}{Eq.}{Eqs.}
\crefname{fig}{Fig.}{Figs.}
\crefname{section}{Sec.}{Secs.}
\crefname{appendix}{App.}{Apps.}
\crefname{proposition}{Prop.}{Props.}
\crefname{theorem}{Thm.}{Thms.}
\crefname{lemma}{Lem.}{Lems.}
\crefname{corollary}{Cor.}{Cors.}

\makeatletter 
\renewcommand\onecolumngrid{%
  \do@columngrid{one}{\@ne}%
  \def\set@footnotewidth{\onecolumngrid}%
  \def\footnoterule{\kern-6pt\hrule width 1.5in\kern6pt}%
}
\makeatother

\crefformat{table}{Table #2#1#3}
\crefformat{equation}{Eq.(#2#1#3)}
\crefformat{figure}{Figure #2#1#3}
\crefformat{section}{Sec.(#2#1#3)}
\crefformat{appendix}{Appendix #2#1#3}
\crefformat{proposition}{Proposition #2#1#3}
\crefformat{theorem}{Theorem #2#1#3}
\crefformat{lemma}{Lemma #2#1#3}
\crefformat{corollary}{Corollary #2#1#3}
\crefrangeformat{equation}{eqs.~(#3#1#4)~and~(#5#2#6)}

\usepackage{nameref}

\begin{document}
\title{Accidental Symmetry in the Tavis-Cummings Model via the Schwinger Boson Representation}

\author{Plato Deliyannis}
\email{plato.deliyannis@duke.edu}
\affiliation{Duke Quantum Center and Department of Physics, Duke University, Durham, NC 27708, USA}
\author{Iman Marvian}
\email{iman.marvian@duke.edu}
\affiliation{Duke Quantum Center and Department of Physics, Duke University, Durham, NC 27708, USA}\affiliation{Department of Electrical and Computer Engineering, Duke University, Durham, NC 27708, USA}

\begin{abstract}
The Jaynes-Cummings (JC) Hamiltonian is a paradigmatic model of light-matter interaction and, more generally, qubit--boson interactions, widely used across atomic, optical, and superconducting qubit platforms.
In the multi-qubit setting, where $n$ qubits are identically coupled to a single boson mode, this interaction is known as the Tavis-Cummings (TC) Hamiltonian.
The structure of the TC model is usually understood in terms of two standard symmetries: permutation invariance of the qubits and a U(1) symmetry associated with conservation of the total excitation number.
Here we identify an additional, independent ``accidental'' symmetry of the TC Hamiltonian and construct the corresponding conserved observable.
We show that, for $n\ge 3$ qubits, this symmetry imposes strong constraints on the realizable unitary transformations.
These constraints persist in the presence of the global $J_z$ Hamiltonian, but are removed by adding $J_z^2$, even though $J_z^2$ preserves both permutation invariance and the U(1) symmetry.
Finally, we explain the origin of this previously unnoticed symmetry using Schwinger's boson representation of angular momentum.
These restrictions have important implications for controllability of the TC system and for its applications to quantum computing, which are investigated further in a companion paper.
\end{abstract}

\maketitle

\section{Introduction} \label{sec:intro}
The Jaynes-Cummings (JC) model provides a simple yet powerful description of the interaction between a qubit, i.e., a two-level system, and a bosonic mode, such as a quantum harmonic oscillator \cite{Jaynes_Cummings_1963,JC_history}. 
Qubit-bosonic interactions in many physical platforms, including trapped ions \cite{Leibfried_2003_trappedion,Haffner_2008_trappedion}, superconducting qubits in circuit QED \cite{Blais_2004_cQED,Wallraff_2004_cQED}, and atoms in cavity QED \cite{Raimond_2001_cavity,Walther_2006_cavity,Varcoe_2000_cavity}, are well approximated by the JC model. 
Because of this ubiquity, the JC model is widely employed as a tool for implementing universal quantum computation \cite{Childs_Chuang_2000,Yuan_Lloyd_2007}, for example by the Cirac-Zoller \cite{Cirac_Zoller_1995} and Mølmer-Sørensen \cite{Molmer_Sorensen} schemes.

In this work, we consider the natural multi-qubit generalization in which multiple qubits are \textit{identically} coupled to a single bosonic mode via the JC interaction, which is commonly referred to as the Tavis-Cummings (TC) interaction \cite{tavis_1968_exact_solut,TC2_1969}. 
For a system of $n$ qubits, this interaction is described by the Hamiltonian
\begin{align}
 \begin{split}
    \HTC &\eq \frac{\gTC}{2}\sum_{i=1}^{n} \Big(\sigma_+^{(i)}{a} + \sigma_-^{(i)}{a}^{\dag}\Big)\\[4pt]
    &\eq \gTC\,\big({J}_+{a} + {J}_-{a}^{\dag}\big)\,,
 \label{eq:TCham}
 \end{split}
\end{align}
where ${\sigma}_{\pm}^{(i)}:=\sigma_x^{(i)}\pm i\sigma_y^{(i)}$ are Pauli raising/lowering operators acting on qubit $i$,
\begin{align}
    J_{\pm} \eq \frac{1}{2}\sum_{i=1}^{n}\sigma_{\pm}^{(i)} \eq J_x\pm i J_y\,, 
\end{align}
are the corresponding $n$-qubit total angular momentum operators, $a$ is the annihilation operator on the bosonic Hilbert space satisfying $[a, a^\dag]=\mathbb{I}$, and $\gTC$ is the coupling strength.\footnote{As usual, we often suppress tensor products in the notation, e.g. $J_+\simeq J_+\otimes\1_{\text{osc}}$, and $J_+a\simeq J_+\otimes a$.}
Sometimes, we also consider the phase-shifted version of this Hamiltonian, namely
\footnote{In terms of the canonical position and momentum operators $x = \frac{1}{\sqrt{2}}\big(a+a^{\dag}\big)$ and $p = \frac{i}{\sqrt{2}}\big(a^{\dag}-a\big)$, we have
$\HTC(\phi):=\gTC\sqrt{2}\big(\cos(\phi)(J_xx-J_yp) + \sin(\phi)(J_yx+J_xp)\big).$}
\begin{align}
 \begin{split}
    \HTC(\phi) &\eq e^{i\phi J_z}\HTC e^{-i\phi J_z} \\[4pt]
    &\eq \gTC\,\big(e^{i\phi}{J}_+{a} + e^{-i\phi} {J}_-{a}^{\dag}\big)\,,
 \end{split}
\end{align}
for $\phi\in[0,2\pi)$. 
Then, the full Hamiltonian of the qubits and bosonic mode can be written as
\begin{align}
    H(t) \eq \omega_{\text{osc}}(t) a^\dag a + \omega_z(t) J_z + f(t) \HTC\,,  
\label{eq:general_Ham}
\end{align}
where $\omega_{\text{osc}}(t)$ and $\omega_z(t)$ are, respectively, the bosonic mode frequency and the qubit transition frequency, and $f(t)$ is an arbitrary real function that determines the strength of the interaction.
\begin{figure}
    \centering
    \includegraphics[width=0.96\linewidth]{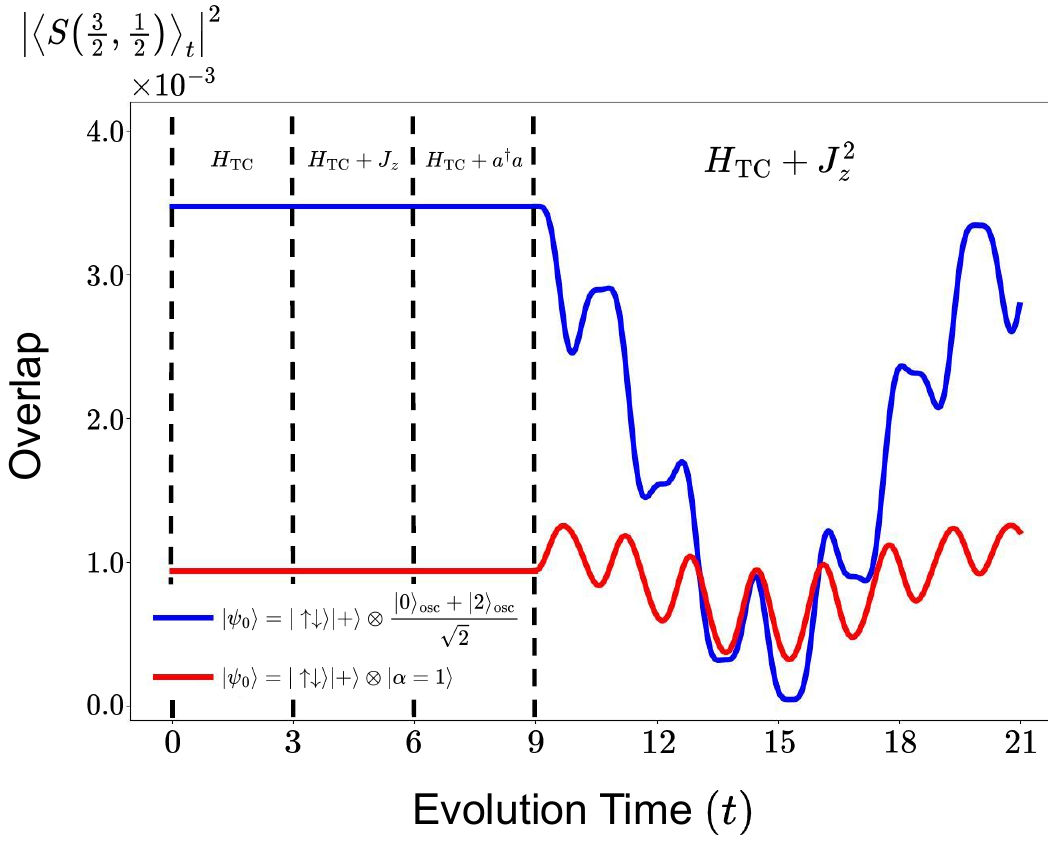}
    \caption{\textbf{Conservation law associated with the accidental symmetry.} 
    Example of a $3$-qubit system illustrating the conservation law, which holds for time evolution under Hamiltonians $\HTC$, $J_z$, and $a^{\dag}a$, but is violated by  $\HTC+J_z^2$. 
    The vertical axis shows the absolute value of the expectation 
    $|\ipo{\psi(t)}{S(\sfrac{3}{2},\sfrac{1}{2})}{\psi(t)}|$, where operator $S(\sfrac{3}{2},\sfrac{1}{2})$ is defined in \cref{sec:conserved_observable}.   
    Qubits are initialized in the product state $\ket{\uparrow} \ket{\downarrow} \ket{+}$, with $\ket{+} = (\ket{\uparrow} + \ket{\downarrow})/\sqrt{2}$, and $\ket{\uparrow},\ket{\downarrow}$ eigenstates of $\sigma_z$ with eigenvalues $+1$ and $-1$, respectively.
    The bosonic mode is initialized either in the coherent state $\ket{\alpha=1}= e^{-1/2}\sum_{n=0}^{\infty}  \ket{n}_{\text{osc}}/\sqrt{n!}$ (red curve) or in $(\ket{0}_{\text{osc}} + \ket{2}_{\text{osc}})/\sqrt{2}$ (blue curve), where $\{\ket{n}_{\text{osc}}\}$ is the occupation (also known as Fock) basis.}
    \label{fig:cons_law}
\end{figure}

This model has been studied extensively in various contexts, from atomic physics to quantum control \cite{Bashir_Abdalla_1995,Bogoliubov_etal_1996,Rybin_etal_1998,Tessier_etal_2003,Vadeiko_etal_2003,Genes_etal_2003,Fink_etal_2009,Agarwal_etal_2012,Zou_etal_2013,Keyl_2014_control}.
Importantly, the permutational symmetry of $\HTC$ allows it to be implemented using collective control fields that act identically on all qubits, which are often easier to engineer than individually addressing each qubit. 
For instance, the TC Hamiltonian can be realized through the coupling of a linear trapped-ion chain to its center-of-mass vibrational mode \cite{Retzker_2007}, or in superconducting qubits \cite{Fink_etal_2009}.

Because the TC Hamiltonian inherently couples multiple qubits to a single bosonic mode, it provides a natural platform for engineering many-body interactions, a broadly useful resource in quantum computing.
Indeed, protocols for preparing highly entangled states, such as Dicke and GHZ states, have been developed by exploiting this interaction \cite{Retzker_2007,Mu_etal_2020,Zhang_etal_2024}.

\subsection{Permutational and U(1) symmetries}
Since the original work by Tavis and Cummings \cite{tavis_1968_exact_solut}, studying the TC Hamiltonian begins by considering its two ``standard'' symmetries: (1) invariance with respect to permuting the qubits, and (2) a U(1) symmetry corresponding to a conserved charge given by the total number of excitations in the bosonic mode and qubits, namely the \textit{charge operator}
\begin{align}
    Q \eq a^{\dag}a + J_z + \frac{n}{2}\,.
 \label{eq:Qop}
\end{align}
(The constant offset $\tfrac{n}{2}$ is included, so the eigenvalues of $Q$ are the non-negative integers, $q=0,1,2,\dotsc\infty$.)
To see that $\HTC$ conserves this charge, observe that the interaction term $J_+a$ raises the $z$-component of angular momentum while lowering the bosonic excitation number by one, while $J_- a^\dag$ produces the opposite effect.
In addition, the harmonic oscillator and qubit Hamiltonians $a^\dag a$ and $J_z$ also conserve this charge.
Therefore,
\begin{align}
   \big[H(t), Q\big]=0\,.
\end{align}

\subsection{An accidental symmetry}
In this work, we identify and characterize an additional “accidental” symmetry of the TC Hamiltonian, independent of its two ``standard'' symmetries.
\cref{eq:trans1} in \cref{sec:accidental_sym} provides a simple description of this symmetry, which, to the best of our knowledge, has not been noticed previously. 
Given the widespread relevance of TC interactions in quantum computing and quantum control, understanding this symmetry and its consequences for system dynamics is essential.

We analyze how this symmetry constrains the evolution of systems governed by Hamiltonians $H(t)$ in Eq.~(\ref{eq:general_Ham}) with arbitrary, time-dependent coefficients. 
We find that for systems with $n \ge 3$ qubits, it imposes nontrivial restrictions and conservation laws, revealing structure beyond the familiar permutational and U(1) symmetries.

To illustrate, \cref{fig:cons_law} shows a three-qubit system evolving under $\HTC$, $\HTC + J_z$, and $\HTC + a^\dag a$. 
In each case, the magnitude of the expectation value of a specific operator (defined in \cref{sec:conserved_observable}) remains conserved. 
By contrast, when the Hamiltonian $\HTC + J_z^2$ is applied, this quantity begins to evolve, despite the fact that permutational and U(1) symmetries are still preserved.
This demonstrates that the accidental symmetry and the corresponding conservation law are not an inevitable consequence of these standard symmetries.

Remarkably, we find that this symmetry admits a simple explanation through Schwinger's oscillator model of angular momentum \cite{Schwinger_model_1952, Sakurai_Napolitano_2020}, as discussed in \cref{sec:schwinger}.
We also explore the implications of this symmetry for the controllability of qubit-bosonic systems coupled via TC interactions in \cref{sec:controllability}.
Further details and discussions are presented in our first companion paper \cite{theory_paper}, while applications to quantum computing -- including explicit constructions of two-qubit gates -- are discussed in the second companion paper \cite{circuit_paper}.

\section{Preliminaries: Matrix elements of $\HTC$}
We begin by reviewing the matrix elements of the $\HTC$ in a convenient basis (see, e.g., \cite{tavis_1968_exact_solut, 
Keyl_2014_control}).
Recall that although $\HTC$ breaks the full rotational SU(2) symmetry of the qubits, and in particular it does not commute with $J_x$, its permutational symmetry ensures that the total angular momentum squared operator $J^2=J_x^2+J_y^2+J_z^2$ is still conserved, i.e.,
\begin{align}
    \big[\HTC, J^2\big]=0\,.
\end{align}
Clearly, the other terms in Hamiltonian $H(t)$ in \cref{eq:general_Ham} also satisfy this property.
This is a special case of a more general fact: any permutation-invariant (PI) operator acting on $(\mathbb{C}^2)^{\otimes n}$ commutes with any SU(2)-invariant operator on $(\mathbb{C}^2)^{\otimes n}$, such as $J^2$, a result that can be understood as a consequence of Schur-Weyl duality \cite{schur_1927,weyl_1939,goodman2010symmetry}.

Hence, to study $\HTC$, one often uses the total angular momentum orthonormal eigenbasis of $(\mathbb{C}^2)^{\otimes n}$, namely  $\ket{j,m,\alpha}$, satisfying
\begin{align} 
 \begin{split} 
    J^2 \,\ket{j,m,\al} &= j(j+1)\,\ket{j,m,\al}\\[4pt] 
    J_z \,\ket{j,m,\al} &= m\,\ket{j,m,\al}\,,
 \end{split}
\end{align} 
where $j$ is integers $j_{\text{min}}=0,1,\dotsc, n/2$ for even $n$, and half integers $j_{\text{min}}=1/2, 3/2, \dotsc, n/2$ for odd $n$; $m=-j,-j+1,\dotsc, j$ is the $z$-component of total angular momentum; and $\alpha$ labels the multiplicity, which corresponds to different copies of SU(2) irreducible representations.
In particular, the number of different values it can take,
\begin{align}
    M(n,j) \,:=\, \left(\begin{matrix}n\\ \frac{n}{2}-j\end{matrix}\right)\frac{2j+1}{\frac{n}{2}+j+1}\,,
 \label{eq:mnj}
\end{align}
is independent of $m$, and only depends on $n$ and $j$ \cite{MLH_2024,Bartlett_etal_2007}.
For a given $j$ and $\al$, one can obtain all $2j+1$ basis states for an irreducible representation of SU(2), using the ladder operators $J_{\pm}$, which act as
%
\begin{align} \label{eq:J+}
    J_+\ket{j,m,\al}&=\sqrt{(j+m+1)(j-m)}\,\ket{j,m+1,\al}\, \\[4pt]\label{eq:J-}
    J_-\ket{j,m,\al}&=\sqrt{(j-m+1)(j+m)}\,\ket{j,m-1,\al}\,.
\end{align}
%
For example, one can choose
\begin{align}\label{eq:ex}
    \ket{j,j,\al_0} := \ket{\Psi^-}^{\otimes (n/2-j)}\otimes\ket{\uparrow}^{\otimes2j}\,,
\end{align}
where
\begin{align}
    \ket{\Psi^-}=\frac{1}{\sqrt{2}}\Big(\ket{\uparrow}\ket{\downarrow}-\ket{\downarrow}\ket{\uparrow}\Big)
\end{align}
denotes a 2-qubit singlet and $\ket{\uparrow}$ and $\ket{\downarrow}$ are the $+1$ and $-1$ eigenstates of $\sigma_z$, respectively. 
Then, all basis elements $\ket{j,m,\al_0}$ with the same values of $j$ and $\alpha_0$, and lower  $m\le j$ can be obtained via the lowering operator $J_-$, as 
\begin{align}\label{lowering}
    \ket{j,m,\al_0} = c_{j,m} (J_-)^{j-m}\ket{j,j,\al_0}\,,
\end{align}
where $c_{j,m}$ is a normalization coefficient, namely
\begin{align}
    c_{j,m}^{-1} &:= \prod_{\widetilde{m}=m+1}^{j}\sqrt{(j+\widetilde{m})(j-\widetilde{m}+1)}\,.
\end{align}
An example with $n=3$ qubits is given in \cref{sec:3Q_example}.
For the bosonic mode, the occupation basis (also known as the Fock basis) is defined by
\begin{align}
    a^{\dag}a\ket{k}_{\text{osc}} = k\ket{k}_{\text{osc}}\quad:\quad k=0,\,\dotsc,\infty\,;
\end{align}
which satisfies, via $[a,a^{\dag}]=\1$,
\begin{align}
    a^\dag\ket{k}&=\sqrt{k+1} \ket{k+1}\,.
 \label{eq:adag}
\end{align}
Then, a convenient basis for the total Hilbert space of $n$ qubits and a bosonic mode is
\begin{align}\label{eq:basis}
    \ket{j,m,\al}\otimes\ket{k}_{\text{osc}}\,.
\end{align}
The explicit matrix of $\HTC$ in this basis is symmetric and tri-diagonal with zeros along the diagonal.
In particular, using \Cref{eq:J+,eq:J-,eq:adag}, the \textit{non-zero} matrix elements of $\HTC$ are 
\begin{align}
 \begin{split}
    &\big(\bra{j, m,\alpha}\otimes \bra{k}_{\text{osc}}\big)\,\HTC\,\big(\ket{j,m-1,\alpha}\otimes\ket{k+1}_{\text{osc}}\big) \\[2pt]
    &=\sqrt{k+1}\times \sqrt{(j+m)(j-m+1)}\,,
 \label{eq:HTC_elements}
 \end{split}
\end{align}
and its equal Hermitian conjugate term.
It is significant to note that the matrix elements depend on $j$, $m$, and $k$, but not the multiplicity index $\al$ or number of qubits $n$.

Note also that the charge operator $Q := a^{\dag}a + J_z + n/2$ is diagonal in this basis, with its eigenvalues given by
\begin{align}
    q \eq k + m + \frac{n}{2}\,.
\end{align}
In summary, the permutational symmetry of $\HTC$ is reflected in the fact that for any pair of basis elements $\ket{j_1,m_1,\alpha_1} \otimes \ket{k_1}$ and $\ket{j_2,m_2,\alpha_2} \otimes \ket{k_2}$, the corresponding matrix element vanishes unless $j_1=j_2$ and $\alpha_1=\alpha_2$, and moreover it is independent of $\alpha_1$.
Similarly, the U(1) symmetry is respected because the matrix elements vanish unless $m_1+k_1=m_2+k_2$.
Any Hamiltonian with these two properties satisfies both permutational and U(1) symmetry.
In particular, the bosonic Hamiltonian $a^\dag a$ and qubit Hamiltonian $J_z$ both satisfy these criteria. 
Similarly, the Hamiltonian $J_z^2$ also respects this symmetry.

\section{The Accidental Symmetry of $\HTC$}
\label{sec:accidental_sym}
Next, we show that $\HTC$ satisfies an additional symmetry that is independent of the permutational  and U(1) symmetries; that is, it is not shared by all Hamiltonians that respect those two symmetries, such as $J_z^2$. 
The accidental symmetry is a consequence of the following simple but consequential observation: the matrix elements of $\HTC$ remain invariant under the transformation
\begin{align}\label{eq:trans1}
 \begin{split}
    \begin{pmatrix}
        j \\
        m \\
        k
    \end{pmatrix}\longrightarrow 
    \begin{pmatrix}
        j' \\
        m' \\
        k'
    \end{pmatrix}
    =
    \begin{pmatrix}
        \sfrac{1}{2} & \sfrac{1}{2} & \sfrac{1}{2} \\
        \sfrac{1}{2} & \sfrac{1}{2} & -\sfrac{1}{2} \\
        1 & -1 & 0
    \end{pmatrix}
    \begin{pmatrix}
        j \\
        m \\
        k
    \end{pmatrix}\,,
 \end{split}
\end{align}
as long as $j'$, $m'$, and $k'$ remain in their allowed ranges, i.e. $j'\leq n/2$ and $|m'|\leq j'$.
To ensure that $j'$ and $m'$ are integers (half-integers) for even (odd) $n$, we should only consider even (odd) $k$ in the above transformation.
This map is therefore valid only for certain values of $j,m,k$, which we discuss further below, in \cref{sec:invariant}.

\vspace{3mm}
More explicitly, recall that the only non-zero matrix elements of $\HTC$ are given by \cref{eq:HTC_elements}, and its Hermitian conjugate. 
Under the transformation in \cref{eq:trans1}, the right-hand side of \cref{eq:HTC_elements} is mapped to
\begin{align}
 \begin{split}
    \sqrt{k'+1}&\times \sqrt{(j'+m')(j'-m'+1)} \\
    &= \sqrt{j-m+1}\times\sqrt{(j+m)(k+1)}\,,
 \end{split}
\end{align}
which means it remains unchanged.

A few remarks are in order. 
First, it is worth noting that the map in \cref{eq:trans1} mixes the qubit degrees of freedom $j$ and $m$ with those of the bosonic mode, $k$.
Moreover, it is an involution, meaning that applying it twice yields the identity, which suggests that it \textit{swaps} two sets of states. 
As we show in \cref{sec:schwinger}, this involution acquires a natural explanation when the accidental symmetry is interpreted using the Schwinger representation of angular momentum.
(We also note that this $3\times 3$ matrix is not a unitary transformation.)

Additionally, note that all states with the same angular momentum $j$ and the same charge $q=k+m+n/2$, are mapped to states with a particular angular momentum $j'$ and charge $q'=k'+m'+n/2$, given by
\begin{align}\label{eq:trans2}
 \begin{split}
    \begin{pmatrix}
        j \\
        q
    \end{pmatrix}\longrightarrow 
    \begin{pmatrix}
        j' \\
        q' 
    \end{pmatrix}
    = \frac{1}{2}
    \begin{pmatrix}
        1 & 1 \\
        3 & -1 
    \end{pmatrix}
    \begin{pmatrix}
        j \\
        q
    \end{pmatrix}
    +\frac{n}{4} 
    \begin{pmatrix}
        -1 \\
        3
    \end{pmatrix} \,.
 \end{split}
\end{align}
This expression follows from \cref{eq:trans1}, by projecting to states for which $k+m$ is constant -- equivalently, multiplying by the matrix $\big(\begin{smallmatrix}
    1&0&0\\0&1&1
\end{smallmatrix}\big)$ -- and using the definition $q:=k+m+n/2$.
Additionally, rearranging \cref{eq:trans2}, such that $q$ and $q'$ are written in terms of $j$ and $j'$ yields
\begin{align} \label{eq:qq'_matrix}
 \begin{split}
    \begin{pmatrix}
        q \\
        q'
    \end{pmatrix}
    =
    \begin{pmatrix}
        -1 & 2 \\
        2 & -1
    \end{pmatrix}
    \begin{pmatrix}
        j \\
        j'
    \end{pmatrix}
    +\frac{n}{2} 
    \begin{pmatrix}
        1 \\
        1
    \end{pmatrix}\,.
 \end{split}
\end{align}
In other words, for each pair  of angular momenta $j$ and $j'$, the map in \cref{eq:trans1} relates states with corresponding charges $q$ and $q'$, as given in \cref{eq:qq'_matrix}, which satisfy
\begin{align}
    q'-q=3\times (j-j')\,.
\end{align}
Furthermore, the transformation in \cref{eq:trans1} preserves the sum
\begin{align}
    r \,:=\, j+m\,,
\end{align}
which also means that the ordering of basis elements $\ket{j,m,\al}\otimes\ket{k}_{\text{osc}}$ with respect to $m$ is preserved.
For example, a state with minimal $z$-component of angular momentum $\ket{j,m=-j}$ is mapped to another state with minimal $z$-component of angular momentum, $\ket{j',m'=-j'}$.

Before providing a full formal description of this accidental symmetry, it is useful to consider a different labeling for the basis we defined above.

\subsection{Invariant subspaces of PI, U(1)-invariant operators} \label{sec:invariant}
Recall the orthonormal basis $\ket{j,m,\alpha}$ for $(\mathbb{C}^2)^{\otimes n}$, and the orthonormal basis
\begin{align}\nonumber
    \ket{j,m,\alpha}\otimes \ket{k}_{\text{osc}} \,,
\end{align}
of the total Hilbert space $\H_{\text{qubits}}\otimes\H_{\text{osc}}$. 
The subspace with charge $q$ and angular momentum $j$ corresponds to all states $\ket{j,m,\alpha}\otimes\ket{k}_{\text{osc}}$, satisfying the constraints $q=m+k+n/2$, where $k\ge 0$ and $-j\le m\le j$.
Putting these constraints together, we find that $m$ takes values
\begin{align}
    m= -j,\,\dots,\,\min\left(j,\,q-\frac{n}{2}\right)\,.
\end{align}
Note that the minimum value of $m$ is $-j$ and the minimum value of $k$ is zero, which means for each $j$, the allowed charge values are $q\ge n/2-j$. 
Equivalently, for each $q$, the allowed $j$ values are $j\ge n/2-q$. 
Of course, the minimum value of $j$ cannot be less than $j_{\text{min}}$, which implies that for a given value of $q$, the allowed values of $j$ are
\begin{align}
    j \,\geq\, \max\left(\jmin,\,\frac{n}{2}-q\right)\,.
\end{align}
Roughly speaking, this is because achieving lower values of $q < n/2$ requires more qubits to be in the state $\ket{\downarrow}$, meaning they tend to be aligned, which corresponds to a higher value of $j$.

In the following, instead of eigenvalues of $J_z$ and $a^\dag a$, it is useful to label elements of basis in \cref{eq:basis} with $q$ the eigenvalue of the charge operator $Q$, and the integer $r=j+m$ as
\begin{align} \label{eq:r_qja}
 \begin{split}
    \big|r^{(q,j,\alpha)}\big\rangle
    &\,:=\, \ket{j,\,r-j,\,\alpha}\otimes \left|q+j-\frac{n}{2}-r\right\rangle_{\rm osc} \\[6pt]
    &\hspace{60pt}:\,\,r=0,\dotsc, d_{n}(q,j)-1\,,
 \end{split}
\end{align}
where
\begin{align}
    d_{n}(q,j):=\min\left\{2j+1,\,q+j-\frac{n}{2}+1\right\}\,.
\end{align}
For fixed values of $q$, $j$, and $\alpha$, this defines a subspace:
\begin{align} \label{eq:sector_def}
 \begin{split}
    \mathcal{H}_{q,j} \eq\text{span}\Big\{&\big|r^{(q,j,\al)}\big\rangle =\ket{j,m,\alpha}\otimes \ket{q-m-\tfrac{n}{2}}_{\text{osc}} \\[2pt] &\,\,:\,\, m= -j,\dotsc, \min\Big(j, q-\frac{n}{2}\Big)\Big\}\,,
 \end{split}
\end{align}
henceforth referred to as a \textit{sector} (see further, \cref{sec:sector_decomp}), with dimension
\begin{align}
 \begin{split}
    \text{dim}(\H\qj) &\eq d_n(q,j)\\[6pt]
    &\eq\begin{cases}
    2j+1 & :\,\,q\geq n/2+j\\[4pt]
    q+1+j-\frac{n}{2} & :\,\,q< n/2+j\,.
  \end{cases}
 \end{split}
\label{eq:recall_dimHqj}
\end{align}
We refer to these two cases as \emph{filled} and \emph{unfilled} sectors, respectively, as in the former case, $\H\qj$ contains the entire $(2j+1)$-dim subspace corresponding to an irrep of SU(2) with angular momentum $j$ (see \cref{fig:6_qubit}).

Any operator respecting the permutational and U(1) symmetries commutes with $J^2$ and $Q$, and is block-diagonal with respect to the subspaces defined in \cref{eq:sector_def}.
Then, for any such operator $A$, we can denote its matrix elements with respect to \cref{eq:r_qja} as
\begin{align} \label{eq:r_matrix_els}
    \big[A_{q,j}\big]_{r_1,r_2} \,:=\, \big\langle r_1^{(q,j,\alpha)}\big|\,A\,\big|r_2^{(q,j,\alpha)}\big\rangle\,,
\end{align}
where the RHS is independent of $\al$ because of permutational symmetry.

It is worth noting that the integer $r=j+m$, can be thought of as the eigenvalue of observable
\begin{align}
    \sqrt{J^2+\sfrac{1}{4}}-\sfrac{1}{2} + J_z \,.
\end{align}

\subsection{Formal statement of accidental symmetry}
The following proposition gives a formal statement of this accidental symmetry, excluding the case of one-dimensional invariant sectors, which will be discussed separately below.
\begin{proposition} \label{prop:main_prop}
For each pair of angular momenta $j$ and $j'$ satisfying
\begin{align}
    0 \,<\,j'\,<\,j\,\leq\,\frac{n}{2}\,,
\end{align}
and for each choice of corresponding multiplicity labels $\alpha$ and $\alpha'$, there exists a \textbf{unique} pair of invariant subspaces of
the Tavis-Cummings interaction $\HTC(\phi)$, for all $\phi\in[0,2\pi)$, each of dimension $2j'+1$ with the following properties:

The first subspace is an unfilled sector, contained in the subspace with angular momentum $j$ and charge
\begin{align} \label{eq:q}
    q \eq \frac{n}{2}+2j'-j\,,
\end{align}
and spanned by the orthonormal basis
\begin{align}\label{eq:input-subspace}
 \begin{split}
    \big|r^{(q,j,\alpha)}\big\rangle
    &\,=\, \ket{j,\,r-j,\,\alpha}\otimes \ket{2j'-r}_{\rm osc}\\[4pt]
    &\hspace{56pt}:\quad r=0,\ldots,2j'\,.
 \end{split}
\end{align}

The second subspace is a filled sector, contained in the subspace with angular momentum $j'$ and charge
\begin{align}
 \label{eq:q'}
    q' \eq \frac{n}{2}+2j-j'\,,
\end{align}
and spanned by the orthonormal basis
\begin{align} \label{eq:output-subspace}
 \begin{split}
    \big|r^{(q',j',\alpha')}\big\rangle
    &\,=\, \ket{j',\,r-j',\,\alpha'}\otimes \ket{2j-r}_{\rm osc}\\[4pt]
    &\hspace{56pt}:\quad r=0,\ldots,2j'\,.
 \end{split}
\end{align}
Moreover, the matrix elements of $\HTC(\phi)$ relative to these two bases are identical; specifically, the only nonzero matrix elements are:
\begin{align} \label{eq:HTCphi_matrix}
 \begin{split}
    &\big[\HTC(\phi)_{q',j'}\big]_{r,\,r-1} \eq \big[\HTC(\phi)_{q,j}\big]_{r,\,r-1}\\[4pt]
    &\hspace{34pt} \eq e^{i\phi}\sqrt{r(2j-r+1)(2j'-r+1)}\,,
 \end{split}
\end{align}
and their Hermitian conjugates, where $r=1,\ldots,2j'$.
Thus the restrictions of $\HTC(\phi)$ to these two subspaces are unitarily equivalent.
\end{proposition}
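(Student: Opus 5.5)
The proof will be a direct computation in the basis of \cref{eq:r_qja}, followed by a dimension count that gives uniqueness.

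\textbf{Identifying the two sectors.} First I will check that the two spans are indeed the sectors $\H_{q,j}$ and $\H_{q',j'}$ and determine which one is filled. With $q=\tfrac n2+2j'-j$ we get $q+j-\tfrac n2+1=2j'+1<2j+1$. Hence, by \cref{eq:recall_dimHqj}, $d_n(q,j)=2j'+1$ and the sector is unfilled. The oscillator label in \cref{eq:r_qja} is $q+j-\tfrac n2-r=2j'-r\ge 0$, which reproduces \cref{eq:input-subspace}. We also need $q\ge 0$, which holds because $j\le n/2$ and $j'>0$, and $q$ has the correct integrality because $2j'-j\equiv j \pmod 1$ when $j,j'$ share the parity type fixed by $n$.

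With $q'=\tfrac n2+2j-j'$ we get $q'+j'-\tfrac n2+1=2j+1>2j'+1$. Thus $d_n(q',j')=2j'+1$, this sector is filled, and its oscillator label is $2j-r$, matching \cref{eq:output-subspace}.

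\textbf{Invariance and matrix elements.} Each sector is invariant under $\HTC(\phi)$ because $\HTC(\phi)$ commutes with $J^2$ and $Q$, preserves $\alpha$, and is block diagonal, as discussed after \cref{eq:r_matrix_els}. The phase enters only through $e^{i\phi}$ multiplying $J_+a$, which follows from $e^{i\phi J_z}J_+e^{-i\phi J_z}=e^{i\phi}J_+$.

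For the matrix elements I will use \cref{eq:HTC_elements} with $m=r-j$ and $k=2j'-r$ in the first sector. The element between $|r\rangle$ and $|r-1\rangle$ is $\sqrt{k+1}\sqrt{(j+m)(j-m+1)}=\sqrt{(2j'-r+1)\,r\,(2j-r+1)}$, times $e^{i\phi}$. In the second sector, $m=r-j'$ and $k=2j-r$ give $\sqrt{(2j-r+1)\,r\,(2j'-r+1)}$, which is the same expression. This is exactly the symmetry of \cref{eq:trans1}.

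No other elements are nonzero, since $\HTC$ changes $r$ by exactly $\pm1$. Identical matrices in orthonormal bases mean the map $|r^{(q,j,\alpha)}\rangle\mapsto|r^{(q',j',\alpha')}\rangle$ is a unitary intertwiner, and it works for every $\phi$ simultaneously.

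\textbf{Uniqueness.} This is the step that needs the most care, because "unique" must be read correctly. I will argue it as follows.

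The matrix is tridiagonal and its off-diagonal entries are all nonzero for $1\le r\le 2j'$, since $r\ge1$, $2j-r+1\ge 2j-2j'+1>0$ and $2j'-r+1\ge1$. So $\HTC$ restricted to each sector is an irreducible Jacobi matrix, and the sector has no proper invariant subspace common to all $\HTC(\phi)$. Here I use that $\HTC(0)$ and $\HTC(\pi/2)$ together generate the tridiagonal real and imaginary parts, hence every matrix unit up to the diagonal.

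Consequently, any invariant subspace of dimension $2j'+1$ inside the $(q,j,\alpha)$ charge/angular-momentum block must be the whole sector $\H_{q,j}$. Moreover, $q$ is the only charge for which the $j$-sector has dimension $2j'+1$: in the unfilled regime $d_n$ is strictly increasing in $q$, and $2j'+1<2j+1$ excludes the filled regime.

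Likewise, for $j'$ the dimension $2j'+1$ is attained at all filled charges $q''\ge n/2+j'$. The one that carries identical matrix elements is singled out by matching the entries: the second factor $2j-r+1$ forces $q''=q'$. Hence the pair of subspaces is unique.
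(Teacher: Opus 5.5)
Your existence argument is correct and coincides with the paper's. You identify $\H_{q,j}$ as unfilled with oscillator label $2j'-r$ and $\H_{q',j'}$ as filled with label $2j-r$, and evaluating \cref{eq:HTC_elements} gives $e^{i\phi}\sqrt{r(2j-r+1)(2j'-r+1)}$ in both. The unfilled half of uniqueness also matches the paper: $q=\tfrac n2+2j'-j$ is the only charge giving a $(2j'+1)$-dimensional $j$-sector, because $d_n$ grows with $q$ in the unfilled regime.

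The gap is in the filled half of uniqueness. Matching entries only shows that no other filled sector $\H_{q'',j'}$ has the \emph{same matrix in the $r$-ordered basis}. The uniqueness the paper asserts (see \cref{sec:uniquess} and the proposition's closing sentence) is stronger: no other filled $j'$ sector carries a \emph{unitarily equivalent} restriction of $\HTC(\phi)$. Entry comparison cannot exclude that, because tridiagonal matrices with different entries are routinely unitarily equivalent; simply reversing the order of the basis reverses the off-diagonal sequence.

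The paper closes this with a basis-independent invariant, the second moment. For a filled sector, $\Tr[(\HTC^2)_{q'',j'}]=\tfrac23\,\gTC^2\, j'(j'+1)(2j'+1)(2q''-n+1)$, which is strictly increasing in $q''$. So at most one filled $j'$ sector can be equivalent to $\H_{q,j}$, and $q''=q'$ is that sector.

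If you only require a single unitary $U$ intertwining $\HTC(\phi)$ for all $\phi$ simultaneously, your route can be rescued. Such a $U$ intertwines $J_+a$ and $J_-a^\dagger$ separately. It must therefore map $\ker(J_+a)$ on one sector (the $r=2j'$ vector) to the corresponding kernel on the other. By repeated lowering, $U$ is then diagonal in the two $r$ bases up to phases, so equivalence forces equal moduli of the entries, and your matching argument finishes the proof.

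Your irreducibility paragraph does no work. A $(2j'+1)$-dimensional subspace of the $(2j'+1)$-dimensional $\H_{q,j}$ is the whole sector by dimension alone. Moreover, uniqueness genuinely fails among arbitrary invariant subspaces: $\text{span}\{|r^{(q,j,\alpha)}\rangle+c\,|r^{(q',j',\alpha')}\rangle\}$ is invariant of dimension $2j'+1$. So the statement has to be read as a claim about sectors, and there the missing ingredient is the unitary invariant above.
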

In the general $n$-qubit case, total angular momentum $j$ takes $\lceil\frac{n}{2}\rceil$ values, excluding $j=0$.
Therefore, it follows that there are
\begin{align}
    {{\lceil\frac{n}{2}\rceil}\choose{2}} \eq \frac{1}{2}\times \left\lceil\frac{n}{2}\right\rceil\times \left(\left\lceil\frac{n}{2}\right\rceil-1\right)\,
\end{align}
pairs of sectors related by the accidental symmetry -- i.e. the number of unique pairs of integers (or half-integers) $(j,j')$ satisfying $0<j'<j\leq\frac{n}{2}$, for even (or odd) $n$, respectively.
We return to this key fact when discussing the controllability of the TC model in \cref{sec:controllability}.

For example, in \cref{sec:3Q_example}, we discuss the case of $n=3$ qubits, for which there is exactly one such pairwise equivalence: $j=3/2$ and $j'=1/2$.
\Cref{fig:6_qubit} illustrates the case of $n=6$ qubits, for which there are three such pairs.
\begin{figure}[htp]
    \centering
    \includegraphics[width=0.96\linewidth]{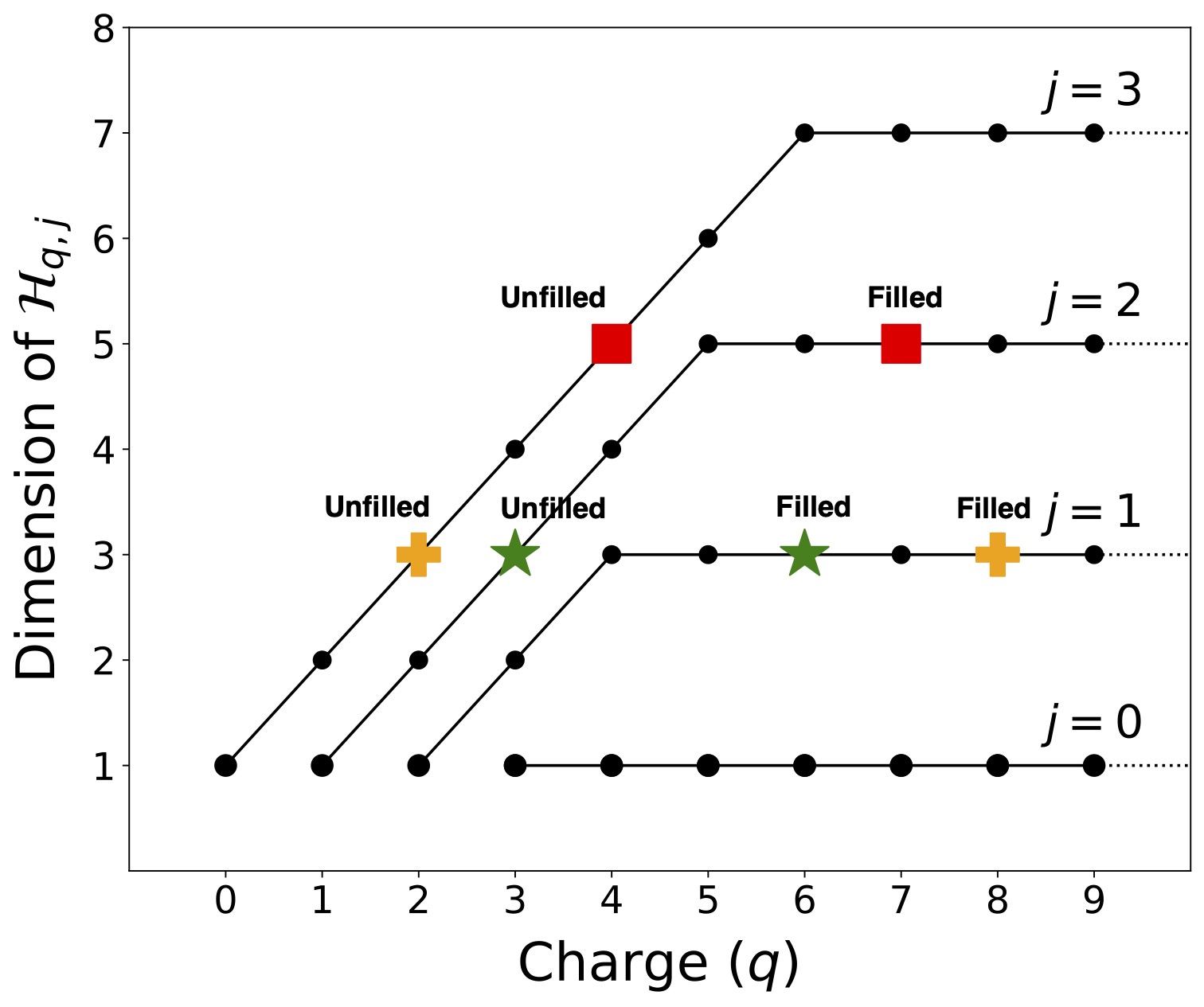}
    \caption{\textbf{Accidental symmetry of TC Hamiltonian:} (6-qubit example). For each pair of angular momenta $j>j'$, there is a single pair of sectors $\H\qj$ and $\H_{q',j'}$ related by the accidental symmetry.
    Diagrammatically, these sectors lie along the horizontal line which represents dimension $2j'+1$: the maximum dimension allowed for a sector with angular momentum $j'$.
    Observe that for each $j>j'>0$, there is a single angular momentum $j$ sector $\H\qj$ with this dimension.
    Then, the difference in charges $q$ and $q'$ is $q'-q=3(j-j')$, i.e. thrice the difference in angular momenta. 
    For example, the red square indicates the pair for $j=3$ and $j'=2$, the yellow cross $j=3$ and $j'=1$, and the green star $j=2$ and $j'=1$.}
    \label{fig:6_qubit}
\end{figure}

\subsection{Uniqueness of paired accidental symmetry sectors} \label{sec:uniquess}
It is worth emphasizing that within the angular momentum $j$ subspace, there are infinitely many filled sectors $\H\qj$ with dimension $2j+1$.
However, as we show below, the components of $\HTC(\phi)$ in different filled angular momentum $j$ sectors cannot be unitarily equivalent.
Furthermore, we show that the components of $\HTC(\phi)$ in different \textit{unfilled} sectors with the same dimension cannot be unitarily equivalent either.
Therefore, the only possibly for components of $\HTC(\phi)$ in different sectors being unitarily equivalent is precisely pairs of one filled and one unfilled sector; and where any given sector appears in at most one such pair.

To see this, note that if the components of $\HTC(\phi)$ in two different sectors were unitarily equivalent, their second moments, defined as
\begin{align} \label{eq:second_moment_def}
    \Tr\Big\{\big[\HTC^2(\phi)_{q,j}\big]\Big\} \eq \sum_{r=0}^{d_{n}(q,j)} \big[\HTC^2(\phi)_{q,j}\big]_{r,r}\,,
\end{align}
would also be equal.
However, using the matrix elements in \cref{eq:HTC_elements}, we show in Appendix~(\ref{sec:appendix}) that for filled sectors with fixed $j$, the second moments of $\HTC(\phi)$ are monotonically increasing with $q$.
Intuitively, this is due to the fact that sectors with higher number of excitations have higher energies.
On the other hand, for any fixed $j$, the dimension of the unfilled sectors $\mathcal{H}_{q,j}$ linearly grows with charge $q$, so there is only one unfilled sector with a given angular momentum $j$ and dimension $2j'+1$.
We also show that for any pair of two unfilled sectors with the same dimension, $\HTC(\phi)$ has distinct second moments.
Thus, the only possibility is exactly one filled and one unfilled sector.
See Appendix~(\ref{sec:appendix}) for further details.
\begin{figure}[htp]
    \centering
    \includegraphics[width=0.96\linewidth]{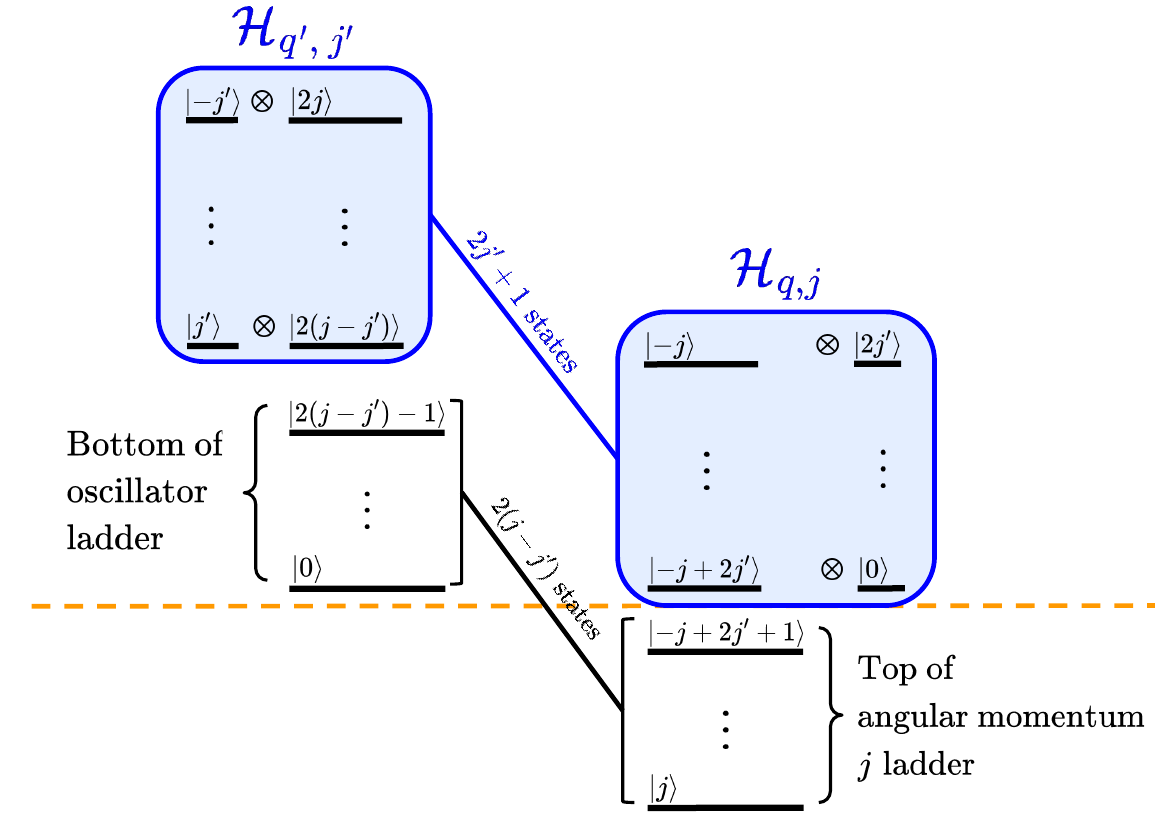}
    \caption{Illustration of a pair of sectors related by the accidental symmetry, as described in \cref{prop:main_prop}.
    Sector $\H\qj$ is unfilled, while sector $\H_{q',j'}$ is filled, meaning it carries a full irrep of SU(2).
    The matrix elements of $\HTC$ in these two sectors, given in \cref{eq:HTCphi_matrix}, are identical.}
    \label{fig:pairwise_equiv}
\end{figure}

\subsection{One-dimensional invariant subspaces}
\label{sec:1D}
In the above discussion, we ignored the one-dimensional (1D) subspaces that are joint eigenvectors of $J_z$, $\HTC$ and hence also joint eigenvectors of $\HTC(\phi)$.

Any operator that respects both permutational symmetry and the U(1) symmetry, associated with the charge $Q$, is block diagonal in the decomposition into sectors $\mathcal{H}\qj$.
Therefore, every 1D sector $\mathcal{H}\qj$ is necessarily an eigenvector of such Hamiltonians.
These sectors are characterized by the condition
\begin{align}
    d_n({q,j})=\min\left\{2j+1,\,q+j-\frac{n}{2}+1\right\}=1\,,
\end{align}
which is satisfied in two cases: (i) $j=m=0$, but arbitrary $q=k+n/2$, corresponding to SU(2) singlets and arbitrary eigenstates of the bosonic mode, or (ii)
\begin{align}
    q \eq \frac{n}{2}-j\,,
\end{align}
in which case $k=0$ and $m=-j$, corresponding to states at the bottom of the SU(2) ladder with the bosonic mode in the vacuum state (see \cref{fig:2q_inv_subs}).
In particular, the former class of states are eigenstates because $j=0$ states, or singlets, are invariant under SU(2).
The latter class of states are eigenstates because the bosonic mode cannot be lowered further, since the minimum eigenvalue of $a^\dag a$ is zero, and the qubits cannot be lowered further within the same permutation-symmetry sector.
Indeed, although lower values of $m$ may exist down to $m=-n/2$, they belong to different angular momentum sectors and therefore cannot be reached without breaking the permutational symmetry $\mathbb{S}_n$.
\begin{figure}[htp]
    \centering
    \includegraphics[width=0.96\linewidth]{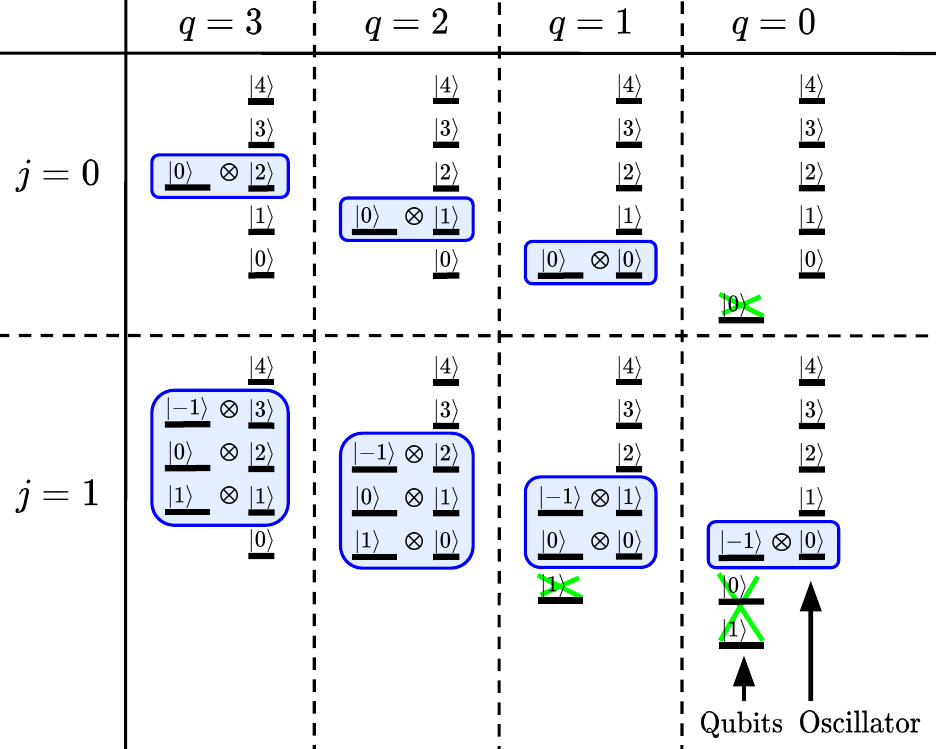}
    \caption{\textbf{2-qubit example:}
    Blue boxes indicate invariant subspaces of any PI, U(1)-invariant operator acting on $n=2$ qubits coupled to a bosonic mode.
    The states are in the format $\ket{m}\otimes\ket{k}$, where $m$ and $k$ are eigenvalues of $J_z$ and $a^{\dag}a$, respectively.
    Note that there are two types of 1D invariant sectors: states in the form $\ket{j=0,m=0}\otimes\ket{k}$ for all $k\ge 0$, and the state $\ket{j=1,m=-1}\otimes\ket{k=0}$.
    All of these 1D sectors have eigenvalue $0$ of $\HTC$.
    The accidental symmetry pairs $\ket{j=1,m=-1}\otimes\ket{k=0}$ and
    $\ket{j=0,m=0}\ket{k=2}$ and implies they should have the same eigenvalue, a constraint that is not necessarily satisfied by arbitrary PI, U(1)-invariant Hamiltonians.}
 \label{fig:2q_inv_subs}
\end{figure}

It happens that each of these 1D sectors is an eigenstate of $\HTC$ with eigenvalue zero.
However, although every PI, U(1)-invariant operator necessarily features these 1D eigen-subspaces, unlike the case of $\HTC$, the eigenvalues of a general PI, U(1)-invariant operator $A$ could be arbitrary and independent of each other.
However, if $A$ also respects the accidental symmetry, then when $n$ is even, certain pairs of eigenvalues must be identical. 
In particular, the map in \cref{eq:trans1} connects  one eigenvector with $j=0$ to one eigenvector with $m=-j$ and $k=0$, namely
\begin{align}
    \ket{j,\,m=-j}\otimes\ket{0}_{\text{osc}}\,\longleftrightarrow\,\ket{j=0,\,m=0}\otimes\ket{2j}_{\text{osc}}\, .
\end{align}
This means that they have the same eigenvalue for any operator that respects the accidental symmetry.

\subsection{Further Remarks}
Under the transformation in \cref{eq:trans1}, $J_z$ does not remain invariant; rather its nonzero matrix elements satisfy:
\begin{align} \label{eq:Jz'_matrix}
 \begin{split}
    \big[(J_z)_{q',j'}\big]_{r_1,r_2} \eq \big[(J_z)_{q,j}\big]_{r_1,r_2} + (j-j')\delta_{r_1,r_2}\,.
 \end{split}
\end{align}
In other words, for a set of states with fixed $q$ and $j$, \cref{eq:trans1} shifts the $z$-component of angular momentum by a constant $(j-j')$, independent of $m$, $k$, and $\al$.
Similarly, for a set of states with fixed $q$ and $j$, \cref{eq:trans1} shifts the bosonic excitation number by a constant $2(j-j')$, independent of $m$, $k$, and $\al$, i.e.
\begin{align}
 \begin{split}
    &\big[(a^{\dag}a)_{q',j'}\big]_{r_1,r_2} \eq \big[(a^{\dag}a)_{q,j}\big]_{r_1,r_2} + 2(j-j')\delta_{r_1,r_2}\,.
 \end{split}
 \label{eq:adaga'_matrix}
\end{align}

In conclusion, in contrast to $\HTC$, the matrix elements of Hamiltonians $J_z$ and $a^\dag a$ do change under this transformation, but only by the very particular form of shifts by $j-j'$.
As we show in the following section, this observation, together with the invariance of $\HTC$, imposes interesting constraints on the dynamics of systems evolving under Hamiltonians that can be written as a linear combination of these three Hamiltonians, which, for instance, explains the conservation law observed in \cref{fig:cons_law}.

Finally, it is worth emphasizing that a similar accidental symmetry exists in the so-called anti–Tavis–Cummings Hamiltonian, obtained by switching $J_+$ and $J_-$ in \cref{eq:TCham}, or equivalently by exchanging $a$ and $a^\dag$,  given by 
\begin{align}
    H_{\text{anti-TC}}=g_{\text{anti-TC}} (J_+ a^\dag+J_- a) = e^{i \pi J_x} \HTC e^{-i \pi J_x} \,. 
\end{align}
In particular, under a $\pi$-rotation about the $x$-axis, the state $\ket{j,m,\al}$ transforms as $e^{-i\pi J_x}\ket{j,m,\al}=(-1)^{j-m}\ket{j,-m,\al}$. 
Therefore, the Hamiltonian remains invariant under the transformation in \cref{eq:trans1}, provided $m$ and $m'$ are simultaneously replaced by $-m$ and $-m'$. 
The overall representation of the accidental symmetry for anti-Tavis-Cummings interaction is thus
\begin{align}
 \begin{split}
    \begin{pmatrix}
        j \\
        m \\
        k
    \end{pmatrix}\longrightarrow 
    \begin{pmatrix}
        j' \\
        m' \\
        k'
    \end{pmatrix}
    =
    \begin{pmatrix}
        \sfrac{1}{2} & -\sfrac{1}{2} & \sfrac{1}{2} \\
        -\sfrac{1}{2} & \sfrac{1}{2} & \sfrac{1}{2} \\
        1 & 1 & 0
    \end{pmatrix}
    \begin{pmatrix}
        j \\
        m \\
        k
    \end{pmatrix}\,.
 \end{split}
\end{align}

\section{A Conserved Observable}
\label{sec:conserved_observable}
Next, we introduce operators and conserved observables that make the consequences of this symmetry explicit.
For each pair $(j,j')$ in the range
\[ \jmin\,\leq\, j'\,<\,j\,\leq\,\frac{n}{2}\,, \]
we define the operator $S_{j,j';\,\alpha,\alpha'}$ that maps states from sectors with angular momenta $j$ to $j'$, while keeping their $m$ order, such that
\begin{align}\label{eq:states}
    S_{j,j';\,\alpha,\alpha'} \big|r^{(q,j,\alpha)}\big\rangle \eq \big|r^{(q',j',\alpha')}\big\rangle\,\,:\,\, r=0,\,\cdots,\,2j',
\end{align}
\noindent where $\alpha$ and $\alpha'$ are arbitrary, but fixed multiplicity indices for angular momenta $j$ and $j'$, respectively.
(See the examples in \cref{fig:pairwise_equiv} and \cref{fig:accidental_3qubit}.) 
Equivalently,
\begin{align} \label{eq:Sjj}
    S_{j,j';\,\alpha,\alpha'} \eq \sum_{r=0}^{2j'} \big|r^{(q',j',\alpha')}\big\rangle\big\langle r^{(q,j,\alpha)}\big|\,.
\end{align}

In the rest of the paper, since the multiplicity indices $\alpha$ and $\alpha'$ are fixed, but arbitrary, we suppress them from the notation and denote this operator as $S_{j,j'}$.
It is also worth noting that $S_{j,j'}$ is a partial isometry: it preserves inner products on the orthogonal complement of its kernel, i.e., the subspace spanned by the states on the left-hand sides of \cref{eq:states}.

Then, \cref{eq:HTCphi_matrix} implies that $\HTC(\phi)$ satisfies
\begin{align}\label{eq:sds}
    \big[\HTC(\phi) , S_{j,j'}\big]=0\,,
\end{align}
for all $\phi\in[0,2\pi)$.
Furthermore, \cref{eq:Jz'_matrix} and \cref{eq:adaga'_matrix} can be rewritten as
\begin{align}
 \begin{split}
    [J_z ,  S_{j,j'}]&= (j-j') S_{j,j'} \\[4pt] 
    [a^\dag a ,  S_{j,j'}]&= 2(j-j') S_{j,j'}\,.
 \end{split}
\end{align}
Therefore, Hamiltonian $H(t)$ in \cref{eq:general_Ham} satisfies
\begin{align}
    \big[H(t), S_{j,j'}\big]= h(t) S_{j,j'}\,,
\end{align}
for a real function $h(t)=(j-j')[2\omega_{\text{osc}}(t)+\omega_{z}(t)]$.

We emphasize that, strictly speaking, the Hamiltonians $J_z$ and $a^\dag a$ break this symmetry.
Nevertheless, the breaking occurs in a highly constrained manner, as identified in the equation above, and this constraint leads to important restrictions on the resulting dynamics. 
That is, all unitary transformations $V(t)$ realized by Hamiltonians  $\HTC$, $J_z$, and $a^\dag a$ under the Schr\"odinger equation
\begin{align}
    \frac{d}{dt}V(t)=-i H(t) V(t)\,,
\end{align}
with $V(0)=\mathbb{I}$, satisfy a restriction identified by the following statement.
\begin{proposition}
For $t\ge 0$, let $V(t)$ denote the unitary evolution realized by a  Hamiltonian $H(t)$ via the Schr\"odinger equation from time $0$ to $t$.
Suppose there exists an operator $A$ such that, for all $t$, $[H(t),A]=h(t)\,A,$ where $h(t)$ is a real-valued function. 
Then
\begin{align}\label{eq:sjs}
    V(t)\,A = e^{i\phi(t)}\, A\, V(t)\,,
\end{align}
where the phase is given by $\phi(t)=-\int_0^t h(s)\,ds$.
\end{proposition}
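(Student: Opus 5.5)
The plan is to consider the operator $W(t) := V(t)^{\dagger} A V(t)$ (the Heisenberg-picture evolution of $A$) and show that it satisfies a scalar linear ODE, so that it can be integrated explicitly. Since $\frac{d}{dt}V(t) = -iH(t)V(t)$, we also have $\frac{d}{dt}V(t)^\dagger = iV(t)^\dagger H(t)$, using that $H(t)$ is Hermitian. Differentiating gives
\begin{align}
    \frac{d}{dt}W(t) \eq i V(t)^\dagger \big[H(t),A\big] V(t) \eq i h(t)\, W(t)\,,
\end{align}
where the second equality uses the hypothesis $[H(t),A]=h(t)A$.

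Because $h(t)$ is a scalar function, this ODE is solved by $W(t) = e^{i\int_0^t h(s)\,ds}\, W(0)$. The initial condition is $W(0)=A$, since $V(0)=\mathbb{I}$. Uniqueness of solutions of a linear ODE gives the solution directly. Alternatively, we can check that $e^{-i\int_0^t h}\,W(t)$ has zero derivative, which avoids any appeal to ODE theory.

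Hence $V(t)^\dagger A V(t) = e^{-i\phi(t)} A$ with $\phi(t) = -\int_0^t h(s)\,ds$. Multiplying on the left by $V(t)$ and using unitarity gives $A V(t) = e^{-i\phi(t)} V(t) A$, which is equivalent to $V(t)A = e^{i\phi(t)} A V(t)$, as claimed in \cref{eq:sjs}. A quick sanity check on the sign: for constant $H$ and $h$, the relation $HA = A(H+h)$ gives $e^{-iHt}A = A e^{-i(H+h)t} = e^{-iht} A e^{-iHt}$, which is consistent with $\phi = -ht$.

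The only delicate step is technical rather than conceptual. In the intended application, $H(t)$ acts on an infinite-dimensional space and contains the unbounded operators $a^\dagger a$ and $\HTC$, so the differentiation step needs justification. I would handle this by noting that every relevant operator, namely $H(t)$, $Q$, and $A=S_{j,j'}$, preserves the finite-dimensional subspaces with fixed charge $q$ and angular momentum $j$ (for $S_{j,j'}$, it maps the fixed-$(q,j)$ sector into the fixed-$(q',j')$ sector). Restricted to the finite direct sum of charge sectors on which $A$ acts nontrivially, all operators are finite matrices, and the computation above is rigorous there. On the orthogonal complement, $A$ vanishes and $V(t)$ preserves the complement, so both sides of \cref{eq:sjs} are zero there.
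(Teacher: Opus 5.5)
Your proof is correct and is essentially the paper's argument: the paper differentiates $e^{i\phi(t)}V^\dagger(t)AV(t)$ directly, shows the derivative vanishes (exactly your alternative check), and then uses $V(0)=\mathbb{I}$ and unitarity to conclude. Your reduction to the finite-dimensional charge sectors $q$ and $q'$, which justifies the differentiation despite the unbounded $a^\dagger a$ and $\HTC$, is a valid refinement that the paper leaves implicit.
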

\begin{proof}
Define operator $V^\dag(t) A V(t) e^{i\phi(t)}$, where $\phi(t)=-\int_0^t h(s) ds$. 
Then,
\begin{align}
 \begin{split}
    &\frac{d}{dt} \Big\{e^{i\phi(t)} V^\dag(t) A V(t)\Big\}\\[8pt]
    &\hspace{4pt}=e^{i\phi(t)} V^\dag(t) \big( -i[A, H]-i A h(t)  \big)V(t) =0\,.
 \end{split}
\end{align}
Therefore,
\begin{align}
    e^{i\phi(t)} V^\dag(t) A V(t)=e^{i\phi(0)} V^\dag(0) A V(0)=A\,
\end{align}
which implies \cref{eq:sjs}.
\end{proof}
Choosing $A=S_{j,j'}$ we find that any unitary $V$ realized by Hamiltonian $H(t)$ in \cref{eq:general_Ham} satisfies
\begin{align}\label{eq:unit} 
    V S_{j,j'}= e^{i\phi}  S_{j,j'} V \,,
\end{align}
for some phase $e^{i\phi}$. 
In words, this means that any unitary transformation that can be realized using Hamiltonians $\HTC$, $a^\dag a$, and $J_z$ acts identically, up to a relative phase, on the two subspaces corresponding to the left- and right-hand sides of \cref{eq:states}.
See also \cref{sec:controllability}.

\subsection{Corresponding Conservation Laws}
\label{sec:conserved_S}
\Cref{eq:unit} is equivalent to $V^\dag S_{j,j'} V= e^{-i\phi} S_{j,j'}$ which in turn, means for any initial joint state $\ket{\psi_1}$ of the qubits and bosonic mode, and the corresponding final state $\ket{\psi_2}=V\ket{\psi_1}$, we obtain the conservation law
\begin{align}
    \Big|\ipo{\psi_2}{S_{j,j'}}{\psi_2}\Big| \eq \Big|\ipo{\psi_1}{S_{j,j'}}{\psi_1}\Big|\,.
 \label{eq:cons_law}
\end{align}
Crucially, while any unitary $V$ realizable with the Hamiltonian $H(t)$ is PI and commutes with $Q$, generic unitaries that satisfy these properties do not necessarily obey the above conservation law. 
For instance, for the family of unitaries $\exp(-i t J_z^2)$, this conservation law is violated (see \cref{fig:cons_law}).\footnote{Note that in general the expectation value of $S_{j,j'}$, which is a complex number, is not conserved, but its absolute value remains conserved.}

\vspace{3mm}
We finish this section with some remarks regarding the definition of the operator $S_{j,j'}$. 
First, $S_{j,j'}$ breaks both permutational and SU(2) symmetry, because it connects sectors with different angular momenta $j$ and $j'$.
Second, the definition of $S_{j,j'}$ in \cref{eq:Sjj} depends on the choice of the suppressed multiplicity indices $\alpha$ and $\alpha'$ in the states $\ket{j,m,\alpha}$ and $\ket{j',m',\alpha'}$. 
For any other choices $\tilde{\alpha}$ and $\tilde{\alpha}'$, the corresponding operator $\widetilde{S}_{j,j'}$ can be written as
\begin{align}
    \widetilde{S}_{j,j'} = T_2 S_{j,j'} T_1 \,,
\end{align}
where $T_1$ and $T_2$ are unitary transformations that can be expressed as linear combinations of permutations.

It follows that for any PI operator $B$, such as $\HTC(\phi)$, the following statements are equivalent:
\begin{align}
    \big[B,\,\widetilde{S}_{j,j'}\big] = 0 \,\,\, \Longleftrightarrow\,\,\,\big[B,\,S_{j,j'}\big] = 0 \,.
\end{align}
Therefore, for PI operators, conservation of $\widetilde{S}_{j,j'}$ does not impose any additional constraints beyond those implied by conservation of $S_{j,j'}$, so fixing $\al$ does not hide any features of the accidental symmetry.

Finally, we note that while $S_{j,j'}$ is not Hermitian, we can construct Hermitian operators that commute with $\HTC$. 
In particular, \cref{eq:sds} implies that observable
\begin{align} \label{eq:S}
    S=\sum_{j=j_{\text{min}}+1}^{n/2}\,\sum_{j'=\jmin}^{j-1} S_{j,j'}+S_{j,j'}^\dag\,,
\end{align}
satisfies
\begin{align} \label{eq:comm0}
    \big[\HTC(\phi),\,S\big]=0\,,
\end{align}
for all $\phi\in[0,2\pi)$. 
It is also worth noting that \cref{eq:sds} could also be obtained from \cref{eq:comm0}, using the fact that $\HTC(\phi)$ has also permutational symmetry. 
In particular, $S_{j,j'}=\Pi_{j'} S \Pi_{j}$, where $\Pi_j$ is the projector onto the eigensubspace of $J^2$ with eigenvalue $j(j+1)$. 
Recall that $S_{j,j'}$ acts on subspaces with fixed multiplicity indices $\al,\al'$.
Since $\Pi_j$ can be written as a linear combination of permutations, it follows that $[\HTC, S_{j,j'}] = 0$ can be deduced from $[\HTC, S] = 0$, together with the permutational symmetry of $\HTC$.

\section{Characterizing symmetry-respecting operators}
\subsection{Permutation-invariant (PI), U(1)-invariant Operators} \label{sec:sector_decomp}
To better understand the constraints imposed by the accidental symmetry, it is useful to characterize the set of all operators on the total Hilbert space that respect both the permutational symmetry and U(1) symmetry with corresponding charge $Q$.

Recall that in \cref{sec:invariant}, we defined invariant subspaces $\H\qj$ of the combined qubit-bosonic Hilbert space,
\begin{align}
    \H_{\text{qubits}}\otimes\H_{\text{osc}} = (\C^2)^{\otimes n}\otimes\mathcal{L}^2(\R)\,,
 \label{eq:Hilbert}
\end{align}
with fixed charge $q$ and angular momentum $j$.
In an $n$-qubit system, the angular momentum $j$ subspace appears with multiplicity $M(n,j)$, which means that the subspace with charge $q$ and angular momentum $j$ decomposes as
\begin{align}
    \C^{M(n,j)} \otimes\H\qj\,.
\end{align}
Thus, the overall Hilbert space decomposes as
\begin{align}\label{eq:decomp}
    \H_{\text{qubits}}\otimes\H_{\text{osc}}\,=\,\bigoplus_{q=0}^{\infty}\, \bigoplus_{j=\max(\jmin,\,\frac{n}{2}-q)}^{n/2}\left(\C^{M(n,j)} \otimes\H\qj\right)\,, 
\end{align}
where for each $q$, the minimum value of $j$ is $\max(\jmin,\,\frac{n}{2}-q)$.

Finally, we note that permutations of qubits conserve both the charge operator $Q$ and $J^2$, and therefore are block-diagonal with respect to this decomposition. 
Furthermore, permutations act only on the multiplicity spaces $\mathbb{C}^{M(n,j)}$. 
Moreover, by Schur–Weyl duality, permutations act irreducibly on these spaces, with each value of $j$ corresponding to an inequivalent representation of the permutation group $\Sn$.
This, in particular, implies:
\begin{proposition} \label{prop:components}
Consider an arbitrary operator $A$ on $\H_{\text{qubits}}\otimes\H_{\text{osc}}$ commuting with the charge operator $Q=J_z+a^\dag a+n/2$.
Then, $A$ is PI if, and only if, with respect to decomposition in \cref{eq:decomp}, it takes the form
\begin{align}\label{eq:perm}
   A\,=\,\bigoplus_{q=0}^{\infty}\, \bigoplus_{j=\max(\jmin,\,\frac{n}{2}-q)}^{n/2}\left(\mathbb{I}_{M(n,j)} \otimes A_{q,j}\right)\,, 
\end{align}
where $\mathbb{I}_{M(n,j)}$ is the identity operator on $\C^{M(n,j)}$, and $A_{q,j}$ is the component of $A$ in $\mathcal{H}_{q,j}$.
\end{proposition}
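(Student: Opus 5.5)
The plan is to reduce the statement to Schur–Weyl duality applied inside each charge sector, using the tensor-product structure of the Hilbert space and the fact that permutations act only on the qubits.

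\emph{Sector decomposition.} Since $A$ commutes with $Q$, it is block diagonal with respect to the eigenspaces $\mathcal{H}^{(q)}$ of $Q$. These eigenspaces are finite dimensional, so the rest of the argument can be carried out one charge sector at a time. Qubit permutations $P_\pi\otimes\mathbb{I}_{\rm osc}$ commute with $Q$, because they commute with $J_z$ and act trivially on the oscillator. Hence $A$ is PI if and only if each block $A^{(q)}$ commutes with the restricted action of $\Sn$ on $\mathcal{H}^{(q)}$.

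\emph{Action of $\Sn$ on $\mathcal{H}^{(q)}$.} Using \cref{eq:decomp}, the sector decomposes as
\[
\mathcal{H}^{(q)}=\bigoplus_{j}\C^{M(n,j)}\otimes\H_{q,j}.
\]
Here the basis $\ket{j,m,\alpha}\otimes\ket{q-m-\tfrac n2}_{\rm osc}$ is identified with $\ket{\alpha}\otimes\big|r^{(q,j,\alpha)}\big\rangle$. By Schur–Weyl duality, $\Sn$ acts on $\ket{j,m,\alpha}$ only through the index $\alpha$, in an irreducible representation $\lambda_j$ that is independent of $m$. The oscillator factor is untouched, and $m$ determines $k$ once $q$ is fixed. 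Therefore $\Sn$ acts on this summand as $\lambda_j(\pi)\otimes\mathbb{I}_{\H_{q,j}}$, where $\mathbb{I}_{\H_{q,j}}$ is the identity on $\H_{q,j}$ and the $\lambda_j$ are pairwise inequivalent for distinct $j$.

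\emph{Necessity.} If $A^{(q)}$ commutes with all $\lambda(\pi)=\bigoplus_j\lambda_j(\pi)\otimes\mathbb{I}$, decompose $A^{(q)}$ into blocks $A_{jj'}:\C^{M(n,j')}\otimes\H_{q,j'}\to\C^{M(n,j)}\otimes\H_{q,j}$. Expanding each block in a basis of $\H_{q,j'}$ and $\H_{q,j}$ gives operators $\C^{M(n,j')}\to\C^{M(n,j)}$ that intertwine $\lambda_{j'}$ and $\lambda_j$. By Schur's lemma these vanish for $j\neq j'$ and are scalar for $j=j'$. This gives $A^{(q)}=\bigoplus_j\mathbb{I}_{M(n,j)}\otimes A_{q,j}$. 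Setting $A_{q,j}$ to be the operator on $\H_{q,j}$ with entries $[A_{q,j}]_{r_1,r_2}$, as in \cref{eq:r_matrix_els}, identifies it as the component of $A$ in $\H_{q,j}$, independent of $\alpha$.

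\emph{Sufficiency.} Conversely, any operator of the form \cref{eq:perm} commutes with $\lambda_j(\pi)\otimes\mathbb{I}$ on each summand. It is therefore PI, and it commutes with $Q$ by construction.

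The only delicate point is making sure the permutation action genuinely factors as $\lambda_j\otimes\mathbb{I}$ on $\C^{M(n,j)}\otimes\H_{q,j}$. This requires the basis $\ket{j,m,\alpha}$ to be chosen so that the $\alpha$-label transforms identically for every $m$. That holds for the construction via \cref{lowering}, since $J_-$ commutes with permutations: applying a permutation to the highest-weight states and then lowering gives the same $\alpha$-mixing at every $m$. I expect this step to need the most care.
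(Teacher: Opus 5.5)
Your proposal is correct and follows essentially the same route as the paper: block-diagonalize by charge, observe that $\Sn$ acts as pairwise inequivalent irreducible representations on the factors $\C^{M(n,j)}$ and trivially on $\H_{q,j}$, then apply Schur's lemma. You also supply details the paper leaves implicit, namely the blockwise intertwiner argument, the converse direction, and the observation that $[J_-,P_\pi]=0$ makes the $\alpha$-action independent of $m$, which justifies the factorization $\lambda_j(\pi)\otimes\mathbb{I}_{\H_{q,j}}$.
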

This follows from the following facts: (i) $[A,Q]=0$ implies that $A$ is block-diagonal with respect to sectors with different charge $q$; (ii) The group of permutations $\mathbb{S}_n$ acts trivially on $\mathcal{H}_{q,j}$ and irreducibly on $\mathbb{C}^{M(n,j)}$. 
Then, applying Schur's lemma proves the claim.

\subsection{Identifying Correlated Sectors}
\label{sec:correlated_sectors}
For a general PI, U(1)-invariant operator $A$, the operators $A_{q,j}$ in \cref{eq:perm} can be independent of each other; for instance, all can be zero except for $A_{q,j}$ corresponding to a particular choice of $q$ and $j$.
However, suppose $A$ also respects the accidental symmetry, such that $[A,S_{j,j'}]=0$ (equivalently, $AS_{j,j'}=S_{j,j'}A$) for all $(j,j')$ in the range $\jmin\leq j'<j\leq n/2$.
This immediately implies that the matrix elements of $A$ satisfy, using the notation introduced in \cref{eq:r_matrix_els},
\footnote{\begin{align*}
    [A_{q,j}]_{r_1,r_2} &\eq \big\langle r_1^{(q',j',\alpha')}\big|\,S_{j,j'}A\,\big|r_2^{(q,j,\alpha)}\big\rangle\\[6pt]
    &\eq \big\langle r_1^{(q',j',\alpha')}\big|\,AS_{j,j'}\,\big|r_2^{(q,j,\alpha)}\big\rangle\\[6pt]
    &\eq \big\langle r_1^{(q',j',\alpha')}\big|\,A\,\big|r_2^{(q',j',\alpha')}\big\rangle \hspace{16pt}\eq [A_{q',j'}]_{r_1,r_2}\,.\end{align*}
}
\begin{align}\label{eq10}
    [A_{q,j}]_{r_1,r_2} = [A_{q',j'}]_{r_1,r_2} \,\,:\,\, r_1,r_2=0, \dotsc, 2j'\,,
\end{align}
where $q$ and $q'$ are given by \cref{eq:qq'_matrix}.
In other words, the components of $A$ in sectors $\H\qj$ and $\H_{q',j'}$ are unitarily equivalent.
Indeed, with respect to the basis in \cref{eq:r_qja}, ordered by increasing $r$, their matrices are equal.
The Tavis-Cummings Hamiltonian is of course, such an operator: see \cref{prop:main_prop}.

\subsection{Dynamical constraints imposed by the accidental symmetry}
\label{sec:controllability}
Consider a unitary operator $V$ that can be realized using Hamiltonians $\HTC$, $J_z$, and $a^\dag a$; that is, Hamiltonians $H(t)$ of the form given in \cref{eq:general_Ham}.
The permutational and U(1) symmetries of $H(t)$ imply that $V$ inherits the same symmetries.
Therefore, by \Cref{prop:components}, it decomposes as
\begin{align}
 \label{eq:Vform}
    V=\bigoplus_{q=0}^{\infty}\, \bigoplus_{j=\max(\jmin,\,\frac{n}{2}-q)}^{n/2}\left(\mathbb{I}_{M(n,j)} \otimes v_{q,j}\right)\,.
\end{align}
For a general PI, U(1)-invariant unitary $V$, the components $v_{q,j}$ are arbitrary and can be chosen independently; that is, they do not satisfy any constraints among themselves.
On the other hand, the accidental symmetry implies that certain pairs of unitaries $v_{q,j}$ must be equal up to an unspecified phase.
In particular, as shown in \cref{eq:unit}, any unitary $V$ realizable by a Hamiltonian $H(t)$ satisfies
\begin{align}
    V S_{j,j'} = e^{i\phi} S_{j,j'} V\quad:\quad 0\leq j' < j\leq \frac{n}{2}\,.
\end{align}
Then, similarly to \cref{eq10}, we find
\begin{align}
    \big[v_{q',j'}\big]_{r_1,r_2} = e^{i\phi} \big[v_{q,j}\big]_{r_1,r_2}\,\,:\,\, r_1,r_2=0, \dotsc, 2j'\,.
\end{align}
In other words, the components of $V$ in sectors $\H\qj$ and $\H_{q',j'}$ are equal, up to a fixed relative phase.

In our companion paper \cite{theory_paper}, we fully characterize all unitaries $V$ that can be realized using Hamiltonians $H(t)$ in \cref{eq:general_Ham}, i.e., using the three interactions $\HTC$, $J_z$, and $a^\dag a$. 
Remarkably, we find that any PI, U(1)-invariant unitary is realizable, provided that (i) it respects the constraint imposed by the above accidental symmetry, and (ii) the determinants of the unitaries $v_{q,j}$ satisfy certain constraints among themselves, which are related to constraints on the center of the Lie algebra of realizable Hamiltonians \cite{PhysRevA.92.042309, marvian_sym_loc_2022}.
Therefore, up to the latter constraints, which typically arise in the presence of symmetry and locality \cite{marvian_sym_loc_2022,HLM_2024_SU(d)}, the constraints imposed by the accidental symmetry are the only additional restrictions. 
In \cite{theory_paper} we also show that adding Hamiltonian $J_z^2$ is sufficient to remove the constraints imposed by the accidental symmetry.

\section{Simplest non-trivial example: 3 qubits}
\label{sec:3Q_example}
First, we briefly comment on the case of $n=2$ qubits, in which case the map in \cref{eq:states} yields exactly one non-trivial equivalence, namely
\begin{align}
 \begin{split}
   S_{2,0}\big(&\ket{j=1,m=-1}\otimes\vac \big)\\[4pt]
   &\hspace{12pt}\eq \ket{j=0,m=0}\otimes\ket{2}_{\text{osc}}\,,
 \end{split}
\end{align}
where
\begin{align}
    S_{2,0} = \ket{j=0,m=0}\bra{j=1,m=-1} \otimes \ket{2}\bra{0}_{\rm osc}\,,
\end{align}
which relates 1D sectors $\mathcal{H}_{q,j}$ with $j=1$ and $q=0$ and $\mathcal{H}_{q',j'}$ with $j'=0$ and $q'=2$.
The condition $[S_{2,0},\HTC]=0$ implies that the two states related by $S_{2,0}$ must be eigenvectors of $\HTC$ with the \emph{same eigenvalue}.
Indeed, both states have eigenvalue zero under $\HTC$.
Note that, as discussed in \cref{sec:1D} and \cref{fig:2q_inv_subs}, this is a genuine constraint beyond permutation invariance and the U(1)
symmetry.

\vspace{0.5em}
The first genuinely interesting and less obvious example of the accidental
symmetry occurs for $n=3$ qubits.
Recall that for a $3$-qubit system, $j$ takes the values $3/2$ and $1/2$, where the irreducible representation with angular momentum $j=3/2$ is multiplicity-free, whereas the representation with $j=1/2$ appears with multiplicity two.

Using the $\ket{j,m}$ basis (suppressing multiplicity index $\alpha$) to represent the qubit states, \cref{eq:states} yields the following nontrivial maps:
\begin{align}\label{eq:3q_equiv}
 \begin{split}
    S_{3/2,1/2}(\ket{\sfrac{3}{2},-\sfrac{1}{2}}\otimes\vac) &= \ket{\sfrac{1}{2},\sfrac{1}{2}}\otimes\ket{2}_{\text{osc}}\\[4pt]
    S_{3/2,1/2}(\ket{\sfrac{3}{2},-\sfrac{3}{2}}\otimes\ket{1}_{\text{osc}}) &=
    \ket{\sfrac{1}{2},-\sfrac{1}{2}}\otimes\ket{3}_{\text{osc}}
 \end{split}\,.
\end{align}
Since $j=1/2$ appears with multiplicity two, there is a freedom in choosing state $\ket{\sfrac{1}{2},\pm\sfrac{1}{2}}$. 
A convenient choice is to follow the convention in \cref{eq:ex}, which means
\begin{align}\label{eq:Jz_elements_from_map}
\begin{split}
    \ket{\sfrac{1}{2},\sfrac{1}{2},\al_0} &:= \ket{\Psi^-}_{12}\ket{\uparrow}_3 \\[6pt]
    \ket{\sfrac{1}{2},-\sfrac{1}{2},\al_0} &:= \ket{\Psi^-}_{12}\ket{\downarrow}_3\,,
 \end{split}
\end{align}
as the two basis states with $j=1/2$ and $m=\pm 1/2$, where $\ket{\uparrow}$, $\ket{\downarrow}$ are the $+1$ and $-1$ eigenstates of $\sigma_z$, respectively.
Since $j=1/2$ has multiplicity two, instead of states $\ket{\sfrac{1}{2},\sfrac{1}{2},\al_0}$ in the above constructions, one could also consider, e.g., the orthogonal states
\begin{align}
 \begin{split}
    \ket{\sfrac{1}{2},\sfrac{1}{2},\al_1} &:= \frac{\ket{\uparrow}_1\ket{\Psi^-}_{23}+\ket{\uparrow}_2\ket{\Psi^-}_{13}}{\sqrt{2}}\\[6pt]
    \ket{\sfrac{1}{2},-\sfrac{1}{2},\al_1} &:= \frac{\ket{\downarrow}_1\ket{\Psi^-}_{23}+\ket{\downarrow}_2\ket{\Psi^-}_{13}}{\sqrt{2}}\,.
 \end{split}
\end{align}
Then, the accidental symmetry implies that $\HTC$ acts identically within the 2D subspace
\begin{align}
 \begin{split}
    \H_{q=1,\,j=3/2} \,:=\, \s_{\mathbb{C}}\big\{&\ket{\sfrac{3}{2},-\sfrac{1}{2}}\otimes\vac, \\
    &\ket{\sfrac{3}{2},-\sfrac{3}{2}}\otimes\ket{1}_{\text{osc}}\big\}
 \end{split}
 \label{eq:Hqj_3qubit_1}
\end{align}
which contains \textit{all} states with angular momentum $j=3/2$ and charge $q=1$, and
\begin{align}
 \begin{split}
    \H_{q'=4,\,j'=1/2} \,:=\, \s_{\mathbb{C}}\big\{&\ket{\sfrac{1}{2},\sfrac{1}{2}}\otimes\ket{2}_{\text{osc}}, \\
    &\ket{\sfrac{1}{2},-\sfrac{1}{2}}\otimes\ket{3}_{\text{osc}}\big\}
 \end{split}
 \label{eq:Hqj_3qubit_2}
\end{align}
for a fixed multiplicity index $\alpha$, e.g., $\alpha_0$ or $\alpha_1$, which contains \textit{all} states with angular momentum $j=1/2$ and charge $q=4$ for that fixed choice of $\al$. 
Explicitly, the only non-zero matrix elements of $\HTC$ in these subspaces are
\begin{align*}
    \sqrt{3} &= (\bra{\tfrac{3}{2},-\sfrac{1}{2}}\otimes \bra{0}_{\text{osc}})\, \HTC \,(\ket{\sfrac{3}{2},-\sfrac{3}{2}}\otimes\ket{1}_{\text{osc}})\\[4pt]
    &=(\bra{\sfrac{1}{2},\sfrac{1}{2},\al_0}\otimes \bra{2}_{\text{osc}})\, \HTC \,(\ket{\sfrac{1}{2},-\sfrac{1}{2},\al_0}\otimes\ket{3}_{\text{osc}})\\[4pt]
    &=(\bra{\sfrac{1}{2},\sfrac{1}{2},\al_1}\otimes \bra{2}_{\text{osc}})\, \HTC \,(\ket{\sfrac{1}{2},-\sfrac{1}{2},\al_1}\otimes\ket{3}_{\text{osc}})\,.
\end{align*}
\Cref{fig:accidental_3qubit} illustrates this phenomenon.

Also, recall from the previous section that $J_z$ is not quite identical in the two sectors defined in \Cref{eq:Hqj_3qubit_1,eq:Hqj_3qubit_2} \Crefrange{eq:Hqj_3qubit_1}{eq:Hqj_3qubit_2} -- rather its matrices differ by $(j-j')\1=(\sfrac{3}{2}-\sfrac{1}{2})\1=\1$, for the 3-qubit case.
Indeed, with respect to the basis introduced in \Cref{eq:r_qja}, ordered by increasing $r$ (equivalently, $m$),
\begin{align}
    \big[(J_z)_{q=1,\,j=3/2}\big] = \big[(J_z)_{q'=4,\,j'=1/2}\big] - \1\,.
\end{align}
For the case of $n=3$ qubits discussed in this section, \cref{fig:cons_law} illustrates examples of how the conservation law in \cref{eq:cons_law} holds for Hamiltonians $\HTC$, $J_z$, and $a^{\dag}a$ -- but is violated by the Hamiltonian $J_z^2$.
In particular, the absolute value of the expectation value of operator $S(\sfrac{3}{2},\sfrac{1}{2})$ as a function of time, with respect to each of these four Hamiltonians, is illustrated for two different initial states.
Each initial state $\ket{\psi_0}$ has components in both accidental symmetry sectors $\H_{q=1,j=3/2}$ and $\H_{q'=4,j'=1/2}$, such that there is nonzero overlap $\ipo{\psi_0}{S(\sfrac{3}{2},\sfrac{1}{2})}{\psi_0}\neq0$.
\begin{figure}[htp]
    \centering
    \includegraphics[width=0.9\linewidth]{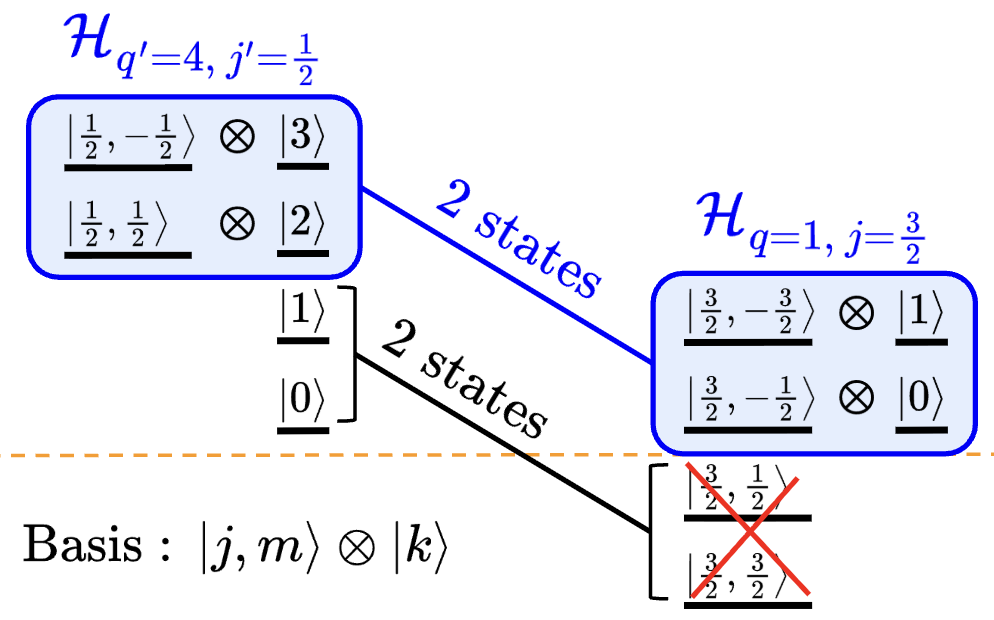}
    \caption{\textbf{3-qubit Example:} two sectors related by accidental symmetry}
    \label{fig:accidental_3qubit}
\end{figure}
\begin{figure*}[htp]
    \centering
    \includegraphics[width=0.99\textwidth]{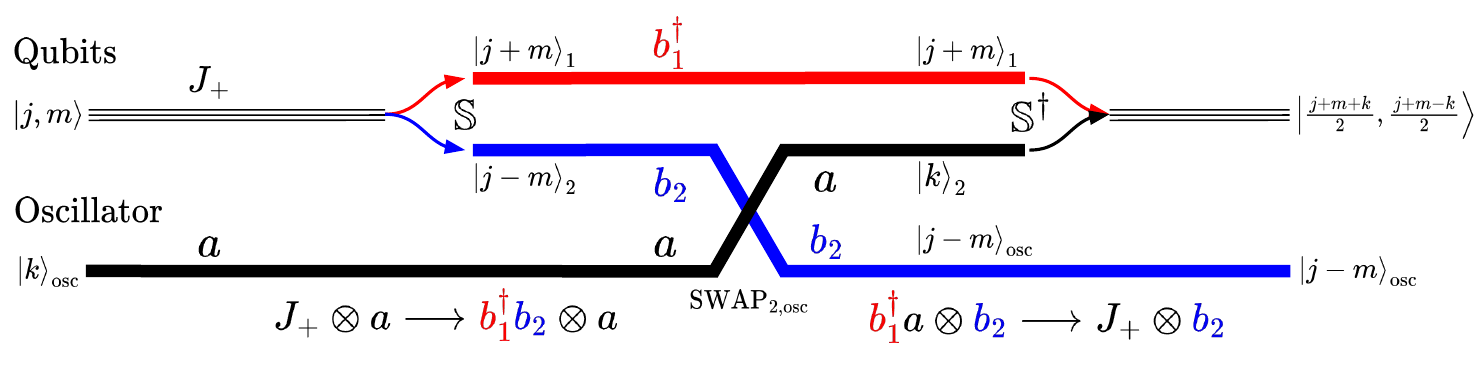}
    \caption{\textbf{Explaining the accidental symmetry of Tavis-Cummings interaction via Schwinger's model of angular momentum.} Illustration of the symmetry transformation $W=\mathbb{S}^{\dag}\text{SWAP}_{2,\text{osc}}\mathbb{S}$. 
    See below for further details.}
    \label{fig:schwinger}
\end{figure*}

\section{Explaining the accidental symmetry using Schwinger's oscillator representation of angular momentum}\label{sec:schwinger}
Here, we further discuss the accidental symmetry and explain how it can be understood via Schwinger's representation of angular momentum.
It is worth mentioning that the Schwinger representation has been used to study the TC model before, e.g., in \cite{Vadeiko_etal_2003}, albeit not in connection with the accidental symmetry identified here.

Introduced by Julian Schwinger in the 1950s \cite{Schwinger_model_1952}, this representation expresses spin operators in terms of two independent quantum harmonic oscillators, also called bosonic modes.
This construction has since become a standard tool in quantum optics and many-body physics due to its algebraic simplicity and versatility.
Consider a pair of uncoupled harmonic oscillators with annihilation operators $b_1$ and $b_2$, satisfying
\begin{align}
    [b_1, b_1^\dag]=[b_2, b_2^\dag]=\mathbb{I}\,,
\end{align}
and
\begin{align}
    [b_1,b_2]=[b_1,b^\dag_2]=0\,.
\end{align}
For integers $r_1,r_2\ge 0$, let $\ket{r_1}_1\ket{r_2}_2$ be the joint state of these two oscillators, in which oscillators 1 and 2 have $r_1$ and $r_2$ excitations, respectively. 
That is, it is an eigenstate of $b^\dag_1b_1$ with eigenvalue $r_1$ and eigenstate of $b^\dag_2b_2$ with eigenvalue $r_2$. 

Now consider the irrep of SU(2) with angular momentum $j$. 
According to Schwinger's description, the corresponding $2j+1$ states $\ket{j,m}: m=-j,\,\dots,\,j$ can be identified with $2j+1$ states of a pair of oscillators with a total of $2j$ excitations.
That is,
\begin{align}\label{eq:Sch}
    \mathbb{S}:\,\ket{j,m} \longrightarrow \ket{j+m}_1\otimes\ket{j-m}_2\,.
\end{align}
Technically, this map defines a unitary transformation
\begin{align}
    \mathbb{S}:\bigoplus_{d=0}^\infty \mathbb{C}^{d} \rightarrow \mathcal{L}^2(\mathbb{R})\otimes \mathcal{L}^2(\mathbb{R})
\end{align}
embedding all irreps of SU(2) with dimension $d=2j+1$ in the Hilbert space of two oscillators.
Note that this map preserves inner products, and is surjective, i.e., for any integers $k_1, k_2 \ge 0$,
\begin{align}
    \mathbb{S}^\dag \big(\ket{k_1}_1\otimes \ket{k_2}_2\big)= \left|j=\tfrac{k_1+k_2}{2}, \,m=\tfrac{k_1-k_2}{2}\right\rangle\,,
\end{align}
where $j=(k_1+k_2)/2$ and $m=(k_1-k_2)/2$ are either both integers or half-integers, and $|m|\le j$, which means they correspond to a valid element of the basis $\{\ket{j,m}\}$. 

It can be easily seen that under this transformation, the angular momentum operators transform as
\bes
\begin{align}
    \mathbb{S} J_z \mathbb{S}^\dag &=  \frac{1}{2}\big(b^\dag_1b_1-b^\dag_2b_2\big)\,,\\
    \mathbb{S}  J_+ \mathbb{S}^\dag&=  b_1^\dag b_2\,.
\end{align}
\ees
Note that, unlike the original definition in \cref{eq:TCham}, we now interpret the angular momentum operators $J_x$, $J_y$, and $J_z$ as operators acting on the space $\bigoplus_{d=0}^\infty \mathbb{C}^{d}$, which realizes all irreps of $\mathrm{SU}(2)$ exactly once, with no multiplicities.
This is distinct from the original representation on a system of $n$ qubits, where certain angular momentum sectors occur with nontrivial multiplicities. 
However, since the accidental symmetry concerns the matrix elements of $\HTC$, which are independent of the multiplicity labels, this distinction is irrelevant for our purposes.

\subsection{Exchange Symmetry as the Origin of the Accidental Symmetry}
Now consider operator $J_+\otimes a$, where $a$ is the bosonic annihilation operator, as defined before (see \cref{fig:schwinger}).
Applying the above transformation, this operator will be mapped to
\begin{align}
    \mathbb{S}( J_+\otimes a)\mathbb{S}^\dag= b_1^\dag b_2a\,,
\end{align}
which means $\HTC(\phi)=\gTC(e^{i\phi}J_+\otimes a+e^{-i\phi} J_-\otimes a^\dag)$ is mapped to
\begin{align}
    \HTC(\phi)\,\longrightarrow\,  \gTC( e^{i\phi} b_1^\dag b_2a+e^{-i\phi} b_1b_2^{\dag}a^{\dag})\,.
\end{align}
Here is the key observation: operator $b_1^\dag b_2a$ remains invariant under swapping the real oscillator with annihilation operator $a$ and the second virtual Schwinger's oscillator with annihilation operator $b_2$. 
In other words,
\begin{align}
    \text{SWAP}_{2,\text{osc}} (b_1^\dag b_2  a) \text{SWAP}_{2,\text{osc}} = b_1^\dag b_2  a\,.
\end{align}
Therefore, going back to the space $\bigoplus_{d=0}^\infty \mathbb{C}^{d}$, we find that operator $J_+\otimes a $ satisfies
\begin{align}
    W^\dag (J_+\otimes a)  W = J_+\otimes a \,,
\end{align}
where
\begin{align}
    W=\mathbb{S}^\dag\,\text{SWAP}_{2,\text{osc}}\,\mathbb{S}\,, 
\end{align}
and $W^2=\mathbb{I}$. 
Note that, as mentioned above, we are interpreting the angular momentum operators $J_\pm$ on the infinite-dimensional space $\bigoplus_{d=0}^\infty \mathbb{C}^{d}$.
In terms of basis $\ket{j,m}\ket{k}_{\text{osc}}$, operator $W$ acts as
\begin{align}\label{eq:travs}
 \begin{split}
    \textstyle W \ket{j,m}\ket{k}_{\text{osc}} &= \mathbb{S}^\dag\,\text{SWAP}_{2,\text{osc}}\big(\ket{j+m}_1 \ket{j-m}_2 
    \ket{k}_{\text{osc}}\big)\\[4pt]
    &=\mathbb{S}^\dag\,\big(\ket{j+m}_1 \ket{k}_2
    \ket{j-m}_{\text{osc}}\big)\\[4pt]
    &=
    \left|\frac{j+m+k}{2},\frac{j+m-k}{2}\right\rangle\ket{j-m}_{\text{osc}}\\[4pt] &=\ket{j',m'}\ket{k'}_{\text{osc}}\,,
 \end{split}
\end{align}
where $j',m', k'$ are exactly as given in \cref{eq:trans1}. 
Furthermore,
\begin{align}
    r \eq j+m \eq j'+m'\,,
\end{align}
is the number of excitations in the first Schwinger oscillator, corresponding to the creation operator $b_1^\dag$, which remains conserved because $\text{SWAP}_{2,\text{osc}}$ acts trivially on this system.
Also, note that the action of the map $W$, when restricted to two sectors $j$ and $j'$, is equivalent to the map $S_{j,j'}$ defined in \cref{eq:Sjj}.
\Cref{fig:schwinger} illustrates the overall effect of this process on operator $J_+\otimes a$ and states $\ket{j,m}\otimes\ket{k}_{\text{osc}}$.

Hence, in the Schwinger picture, we find an explanation for the accidental symmetry of $\HTC(\phi)$, namely, that $\HTC(\phi)$ treats the physical oscillator and the second virtual oscillator equivalently.
Note that a similar relation holds for the anti-TC interaction, $e^{i\phi} J_+ a^\dag + e^{-i\phi} J_- a$.
The only difference is that, in that case, instead of swapping the oscillator with the second virtual oscillator (highlighted in blue in \cref{fig:schwinger}), it should be swapped with  the first virtual oscillator (highlighted in red in \cref{fig:schwinger}).

Finally, note also that operator $W$ transforms $J_z$ as
\begin{align}
 \begin{split}
    J_z'\,&:=\,W^{\dag}J_zW \\
    &=\, \frac{1}{2}\Big(\sqrt{J^2+\sfrac{1}{4}}-\sfrac{1}{2}) + J_z -(a^{\dag}a)_{\text{osc}}\Big)\,,
 \end{split}
\end{align}
which follows from \cref{eq:travs}, and agrees with what we expect from the map in \cref{eq:trans1}, namely $m'=(j+m-k)/2$.

\section{Summary \& Discussion}
In this paper, we identified an ``accidental'' symmetry of the Tavis-Cummings Hamiltonian (\cref{eq:TCham}), which is independent of its ``standard'' permutational and U(1) symmetries.
We first discussed this symmetry in terms of the map in \cref{eq:trans1}, under which certain matrix elements of $\HTC$ remain unchanged.
Equivalently, we formulated the accidental symmetry in terms of a conserved observable $S$ (see \cref{sec:conserved_S}).
Also, we showed how this symmetry can be understood using Schwinger's oscillator model of angular momentum.
Finally, we briefly discussed implications for quantum computing -- in particular how controllability of qubit-bosonic systems interacting via the TC Hamiltonian is restricted by the accidental symmetry.

In \cite{theory_paper}, we investigate in detail the implications of this accidental symmetry for quantum control.
In particular, we fully characterize the sets of unitaries realizable using $\HTC$ together with permutation-invariant (PI) qubit Hamiltonians, such as $J_z$.
We also determine which PI gates on the qubits can be implemented by using the bosonic mode as an ancillary system, and by allowing Hamiltonian $J_x$ in addition to $J_z$ and $\HTC$.
Also, in \cite{circuit_paper}, we provide explicit quantum circuit implementations for all 2-qubit PI unitaries using $\HTC$, $J_z$, and $J_x$. \\

\section*{Acknowledgments}
This work was supported in part by a collaboration between the U.S. Department of Energy and other agencies.
This material is based upon work supported by the U.S. Department of Energy, Office of Science, National Quantum Information Science Research Centers, Quantum Systems Accelerator (Award No. DE-SCL0000121). 
Additional support is acknowledged from NSF PHY-2046195, NSF QLCI grant OMA-2120757, and ARL-ARO QCISS Grant number W911NF-21-1-0005.

\bibliography{main_bib}

\clearpage
\onecolumngrid
\appendix
\section{Appendix: Second moments of $\HTC$ in filled and unfilled sectors}
\label{sec:appendix}
Here, we calculate the second moment of the component of $\HTC(\phi)$ in sector $\H\qj$, as defined in \cref{eq:second_moment_def},
\[
    \Tr\Big\{\big[\HTC^2(\phi)_{q,j}\big]\Big\} \eq \sum_{r=0}^{d_{n}(q,j)-1} \big[\HTC^2(\phi)_{q,j}\big]_{r,r}\,.
\]
Using that the dimension of $\H\qj$ is
\begin{align*}
    d_{n}(q,j) \eq \begin{cases}
    2j+1 & :\,\,q\geq n/2+j\\[4pt]
    q+1+j-\dfrac{n}{2} & :\,\,q< n/2+j\,,
    \end{cases}
\end{align*}
and the relations
\begin{align*}
   r &\eq j+m \qquad\text{and}\qquad k_{q,m} &\eq q-m-\frac{n}{2}\,,
\end{align*}
an explicit calculation yields
\begin{align} \nonumber
    \Tr\Big\{\big[\HTC^2(\phi)_{q,j}\big]\Big\} &\eq \Tr\Big\{\left[\big(e^{i2\phi}(J_+a)^2 + e^{-i2\phi}(J_-a^{\dag})^2 + J_+J_-aa^{\dag} + J_-J_+a^{\dag}a\big)_{q,j}\right]\Big\}
    \\[12pt]\nonumber
    &\hspace{-50pt}\eq \sum_{m=-j}^{d_n(q,j)-1-j}\bra{j,m,\al}\otimes\bra{k_{q,m}}_{\text{osc}}\big(J_+J_-aa^{\dag} + J_-J_+a^{\dag}a\big)\ket{j,m,\al}\otimes\ket{k_{q,m}}_{\text{osc}} \qquad\left(\eq\Tr\Big\{\big[(\HTC^2)_{q,j}\big]\Big\}\right)
    \\[12pt]\nonumber
    &\hspace{-50pt}\eq \begin{cases}\displaystyle
        2\times\sum_{m=-j}^{j}(j+m)(j-m+1)\left(q-m-\frac{n}{2}+1\right)&:\,\,q\geq \dfrac{n}{2}+j\qquad\text{(filled)} \\[16pt]\displaystyle
        2\times\sum_{m=-j}^{q-n/2}(j+m)(j-m+1)\left(q-m-\frac{n}{2}+1\right)
    &:\,\,q<\dfrac{n}{2}+j\qquad\text{(unfilled)}\,.\end{cases}\\[12pt]
    &\hspace{-50pt}\eq \begin{cases}
        \dfrac{2}{3}\times j{(j+1)(2j+1)(2q-n+1)} &:\,\,q\geq \dfrac{n}{2}+j\qquad\text{(filled)} \\[16pt]
        \dfrac{1}{6}\times\left(q-\dfrac{n}{2}+j\right)\left(q-\dfrac{n}{2}+j+1\right)\left(q-\dfrac{n}{2}+j+2\right)\left(3j-q+\dfrac{n}{2}+1\right)&:\,\,q<\dfrac{n}{2}+j
    \qquad\text{(unfilled)}\,.\end{cases}
 \label{eq:second_moments}
\end{align}
For \textit{filled} sectors with fixed $j$, the second moment of $\HTC(\phi)$ is strictly increasing with $q$, so no two filled sectors within the angular momentum $j$ subspace can be unitarily equivalent.

\vspace{3mm}
Furthermore, note from \cref{eq:recall_dimHqj} that  two unfilled sectors $\H_{q_1,j_1}$ and $\H_{q_2,j_2}$ have the same dimension, if and only if, $q_1+j_1=q_2+j_2$, or equivalently, $j_1+m_1=j_2+m_2$.
But then, the first three factors in \cref{eq:second_moments} are equal in the two sectors, so the entire expression is equal only if the last factor is also equal, which implies $3j_1-q_1=3j_2-q_2$.
These two restrictions together imply $j_1=j_2$ and $q_1=q_2$; in other words, $\H_{q_1,j_1}$ and $\H_{q_2,j_2}$ are the same sector.

\end{document}